\newtheorem{theorem}{Theorem}
\newcommand{\figwidth}{9.2}
\begin{document}

\title{Iterative Eigenvalue Decomposition and Multipath-Grouping Tx/Rx Joint Beamformings for Millimeter-Wave Communications}

\author{Zhenyu Xiao,~\IEEEmembership{Member,~IEEE,}
        Xiang-Gen Xia,~\IEEEmembership{Fellow,~IEEE,}
        Depeng Jin,~\IEEEmembership{Member,~IEEE,}
        and Ning Ge,~\IEEEmembership{Member,~IEEE}

\thanks{This work was partially supported by the National Natural Science Foundation of China (NSFC) under grant nos. 61201189, 91338106, and 61231013, the Fundamental Research Funds for the Central Universities under grant nos. YWF-14-DZXY-007, WF-14-DZXY-020 and YMF-14-DZXY-027, National Basic Research Program of China under grant no. 2011CB707000, Foundation for Innovative Research Groups of the National Natural Science Foundation of China under grant no. 61221061, and the National Science Foundation (NSF) of USA under Grant CCF-0964500.}
\thanks{Dr. Zhenyu Xiao is with the School of
Electronic and Information Engineering, Beihang University, Beijing 100191, P.R. China.}
\thanks{Prof. Xiang-Gen Xia is with the Department of Electrical and Computer Engineering, University of Delaware, Newark, DE 19716, USA.}
\thanks{Profs. Depeng Jin and Ning Ge are Department of
Electronic Engineering, Tsinghua University, Beijing 100084, P.R.
China.}

\thanks{Corresponding Author: Zhenyu Xiao (xiaozy06@gmail.com).}
}
\markboth{IEEE TRANSACTIONS ON WIRELESS COMMUNICATIONS, VOL. X, NO.
x, xxx 20xx}{Shell \MakeLowercase{\textit{et al.}}}

\maketitle
\begin{abstract}
We investigate Tx/Rx joint beamforming in millimeter-wave communications (MMWC). As the multipath components (MPCs) have different steering angles and independent fadings, beamforming aims at achieving array gain as well as diversity gain in this scenario. A sub-optimal beamforming scheme is proposed to find the antenna weight vectors (AWVs) at Tx/Rx via iterative eigenvalue decomposition (EVD), provided that full channel state information (CSI) is available at both the transmitter and receiver. To make this scheme practically feasible in MMWC, a corresponding training approach is suggested to avoid the channel estimation and iterative EVD computation. As in fast fading scenario the training approach may be time-consuming due to frequent training, another beamforming scheme, which exploits the quasi-static steering angles in MMWC, is proposed to reduce the overhead and increase the system reliability by multipath grouping (MPG). The scheme first groups the MPCs and then concurrently beamforms towards multiple steering angles of the grouped MPCs, so that both array gain and diversity gain are achieved. Performance comparisons show that, compared with the corresponding state-of-the-art schemes, the iterative EVD scheme with the training approach achieves the same performance with a reduced overhead and complexity, while the MPG scheme achieves better performance with an approximately equivalent complexity.
\end{abstract}

\begin{keywords}
Millimeter wave, beamforming, 60 GHz, multipath, eigenvalue decomposition.
\end{keywords}

\section{Introduction}
Millimeter-wave communications (MMWC) refer to the communications with a carrier frequency in the millimeter-wave (MMW) band, typically tens or even hundreds of GHz. MMWC has a great commercial potential, and has attracted growing attentions owing to its abundant frequency spectrum resource, which enables a much higher capacity than the existing communications below 6 GHz, e.g., the wireless local area network (WLAN) and the cellular mobile communications \cite{wangjunbo2009ant,wangjb2013dyn,wangg2014energy,wangg2014accurate}. This trend is demonstrated by the recent progresses in research and development of 60 GHz WLAN \cite{Rapp_2010_60GHz_general,wang_2011_MMWCS,Park_2008_mmw_wlan_cha&feas,Park_2010_11ad,Xia_2011_60GHz_Tech,xiaozhenyu2013div}, as well as MMW mobile broadband communications \cite{khan_2011,pi_2011_MMW_intro}.

However, a significant challenge that may affect the potential promising prospect of MMWC is high propagation attenuation resulting from the high carrier frequency. To remedy this, antenna arrays can be adopted at both the source and destination devices, where a single radio-frequency (RF) chain (or a single data stream) is tied to the antenna array, to exploit array gains via appropriate single-layer beamforming \cite{Rapp_2010_60GHz_general,wang_2011_MMWCS,Park_2008_mmw_wlan_cha&feas,Park_2010_11ad,Xia_2011_60GHz_Tech}. In general, a single beam is shaped at both the transmitter and receiver, steering towards each other or a specific reflector to achieve array gain \cite{xia_2008_prac_ante_traning,xia_2008_prac_SDMA}. However, such single-direction beamforming may be not robust due to fading or blocking \cite{Park_2012_beam_diversity,xiaozhenyu2013div}. In fact, the multipath components (MPCs) in MMWC have different steering angles and independent fadings \cite{maltsev_2010,geng_2009,Rappa_2002_60GHz_indoor_ch,jacob_2011,rapp_2011_MMW,Rapp_2012_cellular_MMW}. Thus, it becomes necessary to achieve not only array gain, but also diversity gain to increase system reliability, via beamforming.

For single-layer beamforming, it is known that, provided the channel state information (CSI) at both ends, the optimal antenna weight vectors (AWVs) at Tx/Rx can be found under well-known performance criteria, e.g., maximizing receive signal-to-noise ratio (SNR) \cite{tang_2005,wang_2009_beam_codebook,libin2013,nsenga_2009}. In narrow band systems, it is well known that the optimal receive and transmit AWVs are the left and right principal singular vectors of the channel matrix, respectively \cite{tang_2005,TseFundaWC}. However, in wideband MMWC which experiences frequency-selective channels, it is difficult to find a solution of the optimal transmit and receive AWVs. In \cite{nsenga_2009}, Nsenga, et al. have proposed a sub-optimal scheme exploiting eigenvalue decomposition (EVD) and Schmidt decomposition within a high-dimensional space tensed by the transmit and receive AWVs. Although less CSI is required than full CSI, the channel estimation is still time-consuming, and the computations of EVD and Schmidt decomposition in the tensor space are also complicated due to the high dimension, which may limit the practical application of this scheme. On the other hand, in fast fading scenario \footnote{It is noted that a fast/slow fading channel means that the estimation of multi-antenna channel is required frequently/non-frequently in this paper. For instance, in fast fading channel, the estimation may be required each several packets, but in slow fading channel, the estimation may be required each tens of packets.}, these sub-optimal schemes that require CSI at both ends become infeasible due to frequent channel estimation or training, which are time-consuming. In order to reduce overhead and meanwhile increase system reliability, the scheme proposed by Park and Pan in \cite{Park_2012_beam_diversity} can be adopted, which utilizes the quasi-static steering angles of the MPCs in MMWC, and concurrently beamforms along multiple steering angles at both ends to achieve diversity gain in addition to array gain. This scheme is simple to implement and achieves full diversity, but it may be not efficient enough in array gain, and may be infeasible when the number of MPCs is larger than that of the antennas in either ends, because in such a case a solution of AWV may not exist.

The contributions of this paper are twofold. First, a new sub-optimal beamforming scheme is proposed, which finds the AWVs via iterative EVD (IEVD), provided that full CSI is available at both the transmitter and receiver. To make this sub-optimal scheme practically feasible in MMWC, a corresponding training approach is suggested to avoid the channel estimation and iterative EVD computation. The convergence analysis is also provided. Furthermore, in fast fading scenario, a multipath-grouping (MPG) based beamforming scheme is proposed to reduce overhead and increase system reliability. The scheme first groups the MPCs and then concurrently beamforms towards multiple steering angles of the grouped MPCs, so that both array gain and diversity gain are achieved. Owing to the MPG operation, the scheme guarantees a solution of AWV even when the number of MPCs is greater than that of the antennas at both ends. Pairwise-error probability (PEP) and diversity analyses are given. Performance comparisons show that the proposed IEVD scheme achieves the same performance as Nsenga's scheme, but has a lower overhead and complexity by exploiting the training approach. In addition, the MPG scheme outperforms Park and Pan's scheme in array gain with an approximately equivalent complexity.

The rest of this paper is organized as follows. In Section II we introduce the system and the channel models. In Section III we present the IEVD scheme and its training approach, and conduct the convergence analysis. In Section IV, we first give a brief description of the scheme proposed by Park and Pan, and then introduce the MPG scheme. Afterwards, we analyze the PEP and diversity performance. In Section V, we present performance comparison. The conclusions are drawn lastly in Section VI.

\section{System and Channel Models}
Without loss of generality, we consider an MMWC system with
half-wave spaced uniform linear arrays (ULAs) of $n_{\rm{t}}$ and $n_{\rm{r}}$ elements
at the transmitter and receiver, respectively \cite{wang_2009_beam_codebook}, as shown in Fig. \ref{fig:system}. A single RF chain is tied to the ULA at the transmitter and receiver. At the transmitter, a single data stream is transmitted from multiple weighted antenna elements, and at the receiver, signals from multiple antenna elements are weighted and combined to shape a single signal stream. It is noted that the system is half duplex or time-division duplex, i.e., a data stream can also be transmitted from the receiver to the transmitter in the same frequency band but at different time. According to the reported results of channel measurement for MMWC \cite{Xia_2011_60GHz_Tech,maltsev_2010}, only reflection contributes to generating MPCs besides the line-of-sight (LOS) component; scattering and diffraction effects are little due to the extremely small wave length of MMWC. Thus, the MPCs in MMWC have a directional feature, i.e., different MPCs have different physical transmit steering angles $\phi_{{\rm{t}}\ell}$ and receive steering angles $\phi_{{\rm{{\rm{r}}}}\ell}$, as shown in Fig. \ref{fig:system}. In fact,  \cite{moraitis_2007,Park_2012_beam_diversity,sayeed_2011,sayeed_2007} have reported such channel models. It is noted that although only reflection generates significant MPCs, the number of MPCs may be not always small, because there may be many good reflectors in MMWC. For instance, the walls, floor, ceiling, and metal objects are good reflectors for the indoor MMWC \cite{maltsev_2010,geng_2009,Rappa_2002_60GHz_indoor_ch,jacob_2011}, and the buildings are good reflectors for the outdoor MMWC \cite{rapp_2011_MMW,Rapp_2012_cellular_MMW}. Moreover, the second-order reflection components may also have a significant strength \cite{maltsev_2010,geng_2009,Rappa_2002_60GHz_indoor_ch,jacob_2011}.

\begin{figure}[t]
\begin{center}
  \includegraphics[width=7.2 cm]{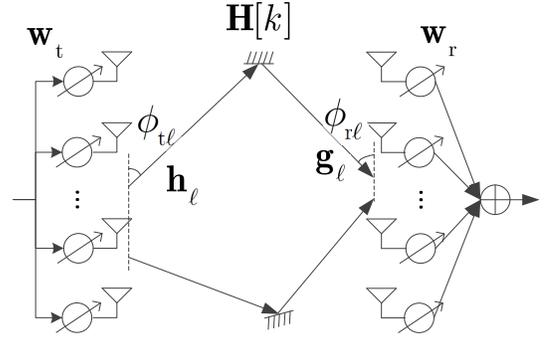}
  \caption{Illustration of the system.}
  \label{fig:system}
\end{center}
\end{figure}

In order to achieve a high transmission speed, the bandwidth of MMWC is generally large. For instance, in both 60 GHz WLAN \cite{Rapp_2010_60GHz_general,wang_2011_MMWCS,Park_2008_mmw_wlan_cha&feas,Park_2010_11ad,Xia_2011_60GHz_Tech,xiaozhenyu2013div} and MMW mobile broadband communications \cite{khan_2011,pi_2011_MMW_intro}, the signal bandwidth is basically greater than 1 GHz, which means the equivalent symbol duration is less than 1 ns. That is to say, a difference of only 3 m in path distance would result in a path delay difference of about $3/(3\times 10^8\times 10^{-9})=10$ symbol durations. As in MMWC the propagation distance may be tens of meters for WLAN or hundreds of meters for mobile communications, the difference of path distances may be several or even tens of meters. Hence, a frequency-selective steering channel model is suitable for MMWC \footnote{All the beamforming methods proposed in this paper are only appropriate when the system bandwidth is sufficient to resolve all the multipath components with different steering angles.}, which has been used in \cite{nsenga_2009,xiaozhenyu2013div} and can be expressed as
\begin{equation} \label{eq_channel}
\begin{aligned}
 {\bf{H}}[k] &= \sum\limits_{\ell  = 0}^{L - 1} {{{{\bf{\hat H}}}_\ell }\delta [k - {\tau _\ell }]}= \sqrt {{n_{\rm{r}}}{n_{\rm{t}}}} \sum\limits_{\ell  = 0}^{L - 1} {{{\bf{g}}_\ell }{\lambda _\ell }{\bf{h}}_\ell ^{\rm{H}}\delta [k - {\tau _\ell }]},  \\
 \end{aligned}
\end{equation}
where $
{{{\bf{\hat H}}}_\ell } = \sqrt {{n_{\rm{r}}}{n_{\rm{t}}}} {{\bf{g}}_\ell }{\lambda _\ell }{\bf{h}}_\ell ^{\rm{H}}$, $(\cdot)^{\rm{H}}$ is the conjugate transpose operation, $\delta[k]$ is the discrete impulse response function, $\lambda_\ell$ and $\tau_\ell$ are the channel coefficient and delay of the $\ell$-th MPC, respectively, and ${{\bf{g}}_\ell }$ and ${{\bf{h}}_\ell }$ are receive and transmit steering vectors of the $\ell$-th MPC given by \cite{moraitis_2007,Park_2012_beam_diversity,sayeed_2011,sayeed_2007},
\begin{equation}
{{\bf{g}}_\ell } = \frac{1}{{\sqrt {{n_{\rm{r}}}} }}[{e^{j\pi 0{\Omega _{{\rm{r}}\ell }}}},{e^{j\pi 1{\Omega _{{\rm{r}}\ell }}}},{e^{j\pi 2{\Omega _{{\rm{r}}\ell }}}},...,{e^{j\pi ({n_{\rm{r}}} - 1){\Omega _{{\rm{r}}\ell }}}}]^{\rm{T}},
\end{equation}
and
\begin{equation}
{{\bf{h}}_\ell } = \frac{1}{{\sqrt {{n_{\rm{t}}}} }}[{e^{j\pi 0{\Omega _{{\rm{t}}\ell }}}},{e^{j\pi 1{\Omega _{{\rm{t}}\ell }}}},{e^{j\pi 2{\Omega _{{\rm{t}}\ell }}}},...,{e^{j\pi ({n_{\rm{t}}} - 1){\Omega _{{\rm{t}}\ell }}}}]^{\rm{T}},
\end{equation}
respectively, where $(\cdot)^{\rm{T}}$ is the transpose operator. Note that $\Omega_{{\rm{t}}\ell}$ and $\Omega_{{\rm{r}}\ell}$ represent the cosine transmit and receive angles of the $\ell$-th MPC, and $\Omega_{{\rm{t}}\ell}=\cos(\phi_{{\rm{t}}\ell})$ and $\Omega_{{\rm{r}}\ell}=\cos(\phi_{{\rm{r}}\ell})$, respectively \cite{TseFundaWC}. Therefore, $\Omega_{{\rm{t}}\ell}$ and $\Omega_{{\rm{r}}\ell}$ are within the range $[-1~1)$. For convenience, in the rest of this paper, $\Omega_{{\rm{t}}\ell}$ and $\Omega_{{\rm{r}}\ell}$ are called transmit and receive angles, respectively.

It is assumed that ${{\bf{h}}_\ell }$ and ${{\bf{g}}_\ell }$ are quasi-static at the transmitter and receiver, respectively, which means that ${{\bf{h}}_\ell }$ and ${{\bf{g}}_\ell }$ vary slowly and can be well estimated at the transmitter and receiver, respectively. While $\lambda_\ell$ are independently and identically distributed complex Gaussian variables with zero mean and variance $1/L$. The basic evidence of this assumption is that the multipath directions, which determine ${{\bf{h}}_\ell }$ and ${{\bf{g}}_\ell }$, will change slowly with respect to node or scatter movement, but the phases of the multipath coefficients can change rapidly due to the short wave length of MMWC, leading to a fluctuation/fading in signal strength \cite{geng_2009,Rappaport2002,marinier_1998}, i.e., $|\lambda_\ell|$. It is noted that in slow fading scenario, $\lambda_\ell$ varies slowly compared with the transmission speed. Thus, ${\bf{H}}[k]$ varies slowly and can be well estimated, which means full CSI may be available at both the transmitter and receiver. However, in fast fading scenario, $\lambda_\ell$ varies fast compared with the transmission speed, which means ${\bf{H}}[k]$ varies fast. Thus, the estimation of ${\bf{H}}[k]$ becomes frequent and time consuming, which may greatly degrade the system efficiency.

By exploiting transmit and receive beamformings, the received signals $y[m]$ are expressed as
\begin{equation}
\begin{aligned}
 y[m] &= \sqrt \gamma  \sum\limits_{\ell  = 0}^{L - 1} {{\bf{w}}_{\rm{r}}^{\rm{H}}{{{\bf{\hat H}}}_\ell }{{{{\bf{w}}_{\rm{t}}}}}s[m - {\tau _\ell }]}  + {{\bf{w}}_{\rm{r}}^{\rm{H}}}{\bf{n}}, \\
 \end{aligned}
\end{equation}
where $\gamma$ is the transmit SNR, which refers to the SNR without accounting the array gain achieved by beamforming in this paper, $s[m]$ are the normalized transmitted information symbols, ${{\bf{w}}_{\rm{t}}}$ and ${{\bf{w}}_{\rm{r}}}$ are transmit and receive AWVs with unit 2-norms, respectively, i.e., $\|{{\bf{w}}_{\rm{t}}}\|_2=\|{{\bf{w}}_{\rm{r}}}\|_2=1$, $L$ is the number of MPCs, ${\bf{n}}$ is the standard Gaussian complex noise vector.

The problem is how to design appropriate ${{\bf{w}}_{\rm{t}}}$ and ${{\bf{w}}_{\rm{r}}}$ to optimize the system performance under difference cases. In the next two sections, solutions to this problem under the slow and fast fading channels are proposed, respectively.

\section{Iterative EVD Scheme}
In this section, we first introduce the IEVD scheme which requires full CSI. Afterwards, the training approach to obtain the sub-optimal AWVs in IEVD are described to avoid channel estimation and iterative EVD computation, which makes the proposed IEVD scheme practically feasible. Finally, the analysis on the convergence of IEVD and the training approach is given.

\subsection{Description of IEVD}
For single-layer beamforming, provided full CSI, the optimal transmit and receive AWVs can be found under well-known performance criteria, e.g., maximizing receive SNR. We adopt the receive SNR as the criterion in this context, because we assume that the inter-symbol interference (ISI) caused by MPCs can be well addressed with equalization technology. In the case that there is no equalizer at the receiver after beamforming, signal-to-interference plus noise (SINR) may be a better criterion. With the criterion of receive SNR, under frequency-flat channels, the optimal AWVs are the principal singular vectors of the channel matrix, and they can be obtained by singular-vector decomposition (SVD) on the channel matrix. However, under frequency-selective channels, the optimal AWVs aiming to maximizing receive SNR are difficult to find, and a feasible solution is still not available in the literature to the best of our knowledge. Although Nsenga's scheme \cite{nsenga_2009}, which exploits EVD and Schmidt decomposition within a high-dimensional space tensed by the transmit and receive AWVs, finds a sub-optimal solution, the channel estimation is still time-consuming, and the computations of EVD and Schmidt decomposition in the tensor space are also complicated due to the high dimension. Thus, in this subsection, we propose another sub-optimal solution.

The optimization problem is formulated as
\begin{equation} \label{eq_op_Gamma}
\begin{aligned}
 &{\mathop{\rm maximize}\nolimits}~&\Gamma  = \sum\limits_{\ell  = 0}^{L - 1} {|{\bf{w}}_{\rm{r}}^{\rm{H}}{{{\bf{\hat H}}}_\ell }{{\bf{w}}_{\rm{t}}}{|^2}},  \\
 &{\mathop{\rm {subject~to}}\nolimits} &\|{{\bf{w}}_{\rm{t}}}{\|_2} = \|{{\bf{w}}_{\rm{r}}}{\|_2} = 1, \\
 \end{aligned}
 \end{equation}
where $\Gamma$ is the power gain (or the array gain, and $\gamma\Gamma$ is the receive SNR). It is noted that in MMWC there are two types of antenna arrays in practice that correspond to different constraints on AWV of one-layer beamforming. The first one is phased array with constant amplitude. In such a case all the elements of ${{\bf{w}}_{\rm{t}}}$ or ${{\bf{w}}_{\rm{r}}}$ should be the same constant, and only the phases can be adjusted. For instance, \cite{Ayach2014,Xiaozy2014BeamTrain} have adopted this model. While the other one is antenna array with both amplitude and phase adjustable, which has also be widely used in \cite{xia_2008_prac_ante_traning,Park_2012_beam_diversity,nsenga_2009,xiaozhenyu2013div}, and there are also practical implementations of this model, e.g., in \cite{Emami2011,Pinel2009}. In our model, both amplitude and phase are adjustable, i.e., the latter one is adopted.

In general, when we discuss the optimality of the solution to this problem, we implicitly assume that ${{{\bf{\hat H}}}_\ell }$ is known \emph{a priori}, which corresponds to the slow fading case, where ${{{\bf{\hat H}}}_\ell }$ can be periodically estimated without too much overhead. However, in the fast fading case, the estimation of ${{{\bf{\hat H}}}_\ell }$ becomes frequent and the overhead used for the estimation becomes high, beamforming can only be conduct according to the quasi-static steering vectors, which will be discussed in the next section.

Given that ${{{\bf{\hat H}}}_\ell }$ is known \emph{a priori}, when $L=1$, the optimal AWVs are the principal singular vectors of ${{{\bf{\hat H}}}_0}$. When $L>1$, the optimal AWVs are difficult to find. However, if given ${{\bf{w}}_{\rm{r}}}$, we have
\begin{equation}
\Gamma  = \sum\limits_{\ell  = 0}^{L - 1} {{\bf{w}}_{\rm{t}}^{\rm{H}}{\bf{\hat H}}_\ell ^{\rm{H}}{{\bf{w}}_{\rm{r}}}{\bf{w}}_{\rm{r}}^{\rm{H}}{{{\bf{\hat H}}}_\ell }{{\bf{w}}_{\rm{t}}}}  = {\bf{w}}_{\rm{t}}^{\rm{H}}\left( {\sum\limits_{\ell  = 0}^{L - 1} {{\bf{\hat H}}_\ell ^{\rm{H}}{{\bf{w}}_{\rm{r}}}{\bf{w}}_{\rm{r}}^{\rm{H}}{{{\bf{\hat H}}}_\ell }} } \right){{\bf{w}}_{\rm{t}}}.
 \end{equation}
The optimal ${{\bf{w}}_{\rm{t}}}$ in such a case is the principal eigenvector of $\left( {\sum_{\ell  = 0}^{L - 1} {{\bf{\hat H}}_\ell ^{\rm{H}}{{\bf{w}}_{\rm{r}}}{\bf{w}}_{\rm{r}}^{\rm{H}}{{{\bf{\hat H}}}_\ell }} } \right)$. Similarly, given ${{\bf{w}}_{\rm{t}}}$, we have
\begin{equation}
\Gamma  = \sum\limits_{\ell  = 0}^{L - 1} {{\bf{w}}_{\rm{r}}^{\rm{H}}{{{\bf{\hat H}}}_\ell }{{\bf{w}}_{\rm{t}}}{\bf{w}}_{\rm{t}}^{\rm{H}}{\bf{\hat H}}_\ell ^{\rm{H}}{{\bf{w}}_{\rm{r}}}}  = {\bf{w}}_{\rm{r}}^{\rm{H}}\left( {\sum\limits_{\ell  = 0}^{L - 1} {{{{\bf{\hat H}}}_\ell }{{\bf{w}}_{\rm{t}}}{\bf{w}}_{\rm{t}}^{\rm{H}}{\bf{\hat H}}_\ell ^{\rm{H}}} } \right){{\bf{w}}_{\rm{r}}}.
\end{equation}
The optimal ${{\bf{w}}_{\rm{r}}}$ in such a case is the principal eigenvector of $\left( {\sum_{\ell  = 0}^{L - 1} {{{{\bf{\hat H}}}_\ell }{{\bf{w}}_{\rm{t}}}{\bf{w}}_{\rm{t}}^{\rm{H}}{\bf{\hat H}}_\ell ^{\rm{H}}} } \right)$. Based on this, we suggest the IEVD scheme shown in Algorithm \ref{alg:sub}. This scheme is a typical \emph{alternating optimization} approach \cite{LiQiang2013_AO,Lamare2011_AO}, which does not guarantee the optimal solution, but is efficient to find a sub-optimal solution.

\begin{algorithm}[tb]\caption{The IEVD Scheme.}\label{alg:sub}
\begin{algorithmic}
\STATE \textbf{1) Initialize:}
\begin{quote}
Randomly pick a normalized initial receive AWV ${{\bf{w}}_{\rm{r}}}$.
\end{quote}

\STATE \textbf{2) Iteration:}
\begin{quote}
Iterate the following process $\varepsilon$ times, then stop.
\vspace{0.1 cm}

Compute EVD on $\left( {\sum_{\ell  = 0}^{L - 1} {{\bf{\hat H}}_\ell ^{\rm{H}}{{\bf{w}}_{\rm{r}}}{\bf{w}}_{\rm{r}}^{\rm{H}}{{{\bf{\hat H}}}_\ell }} } \right)$, and set the principal eigenvector to ${{\bf{w}}_{\rm{t}}}$.

\vspace{0.1 cm}
Compute EVD on $\left( {\sum_{\ell  = 0}^{L - 1} {{{{\bf{\hat H}}}_\ell }{{\bf{w}}_{\rm{t}}}{\bf{w}}_{\rm{t}}^{\rm{H}}{\bf{\hat H}}_\ell ^{\rm{H}}} } \right)$, and set the principal eigenvector to ${{\bf{w}}_{\rm{r}}}$.
\end{quote}

\STATE \textbf{3) Result:}
\begin{quote}
${{\bf{w}}_{\rm{t}}}$ is the transmit AWV, and ${{\bf{w}}_{\rm{r}}}$ is the receive AWV.
\end{quote}

\end{algorithmic}
\end{algorithm}

\subsection{The Training Approach}
It is clear that the IEVD scheme relies on \emph{a priori} CSI. Since there are totally $n_{\rm{r}}\times n_{\rm{t}}\times L$ coefficients in the multipath channel matrices, the conventional channel estimation which estimates these coefficients one by one is rather time-consuming. Although when the path number is much smaller than the antenna number, the CSI can be fully characterized by the direction of departure (DoD), direction of arrival (DoA) and fading coefficient of each path, the angle estimation of DoD and DoA may also be complicated. Thus, only in slow fading scenario, the IEVD scheme may be applicable in principle. In fact, even in slow fading scenario, the time-costly channel estimation may still significantly degrade the system efficiency due to the overhead used for channel estimation. In addition, the iterative EVD consumes much computation resource. Thus, the channel estimation and EVD computation make the IEVD scheme not so attractive even in slow fading scenario. To address this problem, we suggest the training approach to obtain the AWVs in the IEVD scheme, which is based on the well-known ``power method'' \cite{Horn_Johnson}, i.e., the principal eigenvector of an arbitrary $N\times N$ matrix $\bf{X}$ can be approximated computed by normalizing ${{\bf{X}}^K}{\bf{w}}$, where $\bf{w}$ is an arbitrary vector, given that $K$ is sufficiently large.


The training approach utilizes the reciprocal feature of the channel, i.e., given ${\bf{H}}[k]$ as the forward channel response from source (the transmitter) to destination (the receiver) \footnote{In the training process, the transmitter also needs to receive with the same antenna array, and the receiver also needs to transmit with the same antenna array. Thus, we use the source and destination instead.}, as shown in (\ref{eq_channel}), the backward channel response from destination to source is $({\bf{H}}[k])^{\rm{H}}$ \footnote{Strictly speaking, the reciprocal channel of $({\bf{H}}[k])$ should be $({\bf{H}}[k])^{\rm{T}}$ \cite{tang_2005}, but $({\bf{H}}[k])^{\rm{H}}$ is also usually used instead for convenience \cite{Bharath2013}. They are equivalent in beamforming design.}. Based on the simplified approach to compute EVD as well as the reciprocal channel, the training approach is introduced in Algorithm \ref{alg:train}. With this training approach, the channel estimation and iterative EVD computation are bypassed, which greatly increases the practical feasibility of the IEVD scheme.

There are various stopping rules for IEVD and its training approach. A simple one is to stop after a certain number of iterations, e.g., $\varepsilon$ iterations used in Algorithms \ref{alg:sub} and \ref{alg:train}. In fact, these two schemes converge fast and basically only need 2 or 3 iterations, which will be shown later. Another one is to compute $\Gamma$ after the $n$-th iteration and get $\Gamma^{(n)}$ according to (\ref{eq_op_Gamma}). When $\Gamma^{(n)}/\Gamma^{(n-1)}<\mu$, stop the iteration, where $\mu$ is a predefined threshold slightly greater than 1, e.g., $\mu=1.05$.

It is noted that the training approach share the same iteration principle with IEVD. Thus, they not only have the same stopping rules, but also have the same convergence properties, which will be analyzed in the next subsection.

\begin{algorithm}[tb]\caption{The Training Approach.}\label{alg:train}
\begin{algorithmic}
\STATE \textbf{1) Initialize:}
\begin{quote}
Randomly pick a normalized initial AWV ${{\bf{w}}_{\rm{r}}}$ at the destination.
\end{quote}

\STATE \textbf{2) Iteration:}
\begin{quote}
Iterate the following process $\varepsilon$ times, then stop.
\vspace{0.1 cm}

\emph{The destination transmits training sequences to the source: }Keep transmitting training sequences with the same AWV ${{\bf{w}}_{\rm{r}}}$ at the destination over $n_{\rm{t}}$ slots, one sequence in a slot. Meanwhile, use identity matrix ${\bf{I}}_{n_{\rm{t}}}$ as the receive AWVs at the source, i.e., the $i$-th column of ${\bf{I}}_{n_{\rm{t}}}$ as the receive AWV at the $i$-th slot. Ignoring the noise and decoding the training sequences, in the $i$-th slot we receive ${r_i}[\ell ] = {\bf{e}}_i^{\rm{H}}{\bf{\hat H}}_\ell ^{\rm{H}}{{\bf{w}}_{\rm{r}}}$, $i = 1,2,...,{n_{\rm{t}}}$, $\ell  = 1,2,...,L$, where ${{\bf{e}}_i}$ is the $i$-th column of ${\bf{I}}_{n_{\rm{t}}}$. Then we arrive at
${\bf{r}}[\ell ] = {\{ {r_i}[\ell ]\} _{i = 1,2,...,{n_{\rm{t}}}}} = {{\bf{I}}_{n_{\rm{t}}}^{\rm{H}}}{\bf{\hat H}}_\ell ^{\rm{H}}{{\bf{w}}_{\rm{r}}} = {\bf{\hat H}}_\ell ^{\rm{H}}{{\bf{w}}_{\rm{r}}}$ and ${{\bf{R}}_{\rm{S}}} = \sum_{\ell  = 0}^{L - 1} {{\bf{r}}[\ell ]{{\left( {{\bf{r}}[\ell ]} \right)}^{\rm{H}}}}$. Normalize ${\bf{R}}_{\rm{S}}^K{{\bf{e}}_1}$ and set the result to ${{\bf{w}}_{\rm{t}}}$ as a new AWV in source.

\vspace{0.1 cm}
\emph{The source transmits training sequences to the destination: }Keep transmitting training sequences with the same AWV ${{\bf{w}}_{\rm{t}}}$ at the source over $n_{\rm{r}}$ slots, one sequence in a slot. Meanwhile, use identity matrix ${\bf{I}}_{n_{\rm{r}}}$ as the receive AWVs at the destination, i.e., the $j$-th column of ${\bf{I}}_{n_{\rm{r}}}$ as the receive AWV at the $j$-th slot. Ignoring the noise and decoding the training sequences, in the $j$-th slot we receive ${{\bar r}_j}[\ell ] = {\bf{\bar e}}_j^{\rm{H}}{{{\bf{\hat H}}}_\ell }{{\bf{w}}_{\rm{t}}}$, $j = 1,2,...,{n_{\rm{r}}}$, $\ell  = 1,2,...,L$, where ${\bf{\bar e}}_j$ is the $j$-th row of ${\bf{I}}_{n_{\rm{r}}}$. Then we arrive at
${\bf{\bar r}}[\ell ] = {\{ {{\bar r}_j}[\ell ]\} _{j = 1,2,...,{n_{\rm{r}}}}} = {{\bf{I}}_{n_{\rm{r}}}^{\rm{H}}}{{{\bf{\hat H}}}_\ell }{{\bf{w}}_{\rm{t}}} = {{{\bf{\hat H}}}_\ell }{{\bf{w}}_{\rm{t}}}$ and ${{\bf{R}}_{\rm{D}}} = \sum_{\ell  = 0}^{L - 1} {{\bf{\bar r}}[\ell ]{{\left( {{\bf{\bar r}}[\ell ]} \right)}^{\rm{H}}}}$. Normalize ${\bf{R}}_{\rm{D}}^K{{{\bf{\bar e}}}_1}$ and set the result to ${{\bf{w}}_{\rm{r}}}$ as a new AWV in destination.
\end{quote}

\STATE \textbf{3) Result:}
\begin{quote}
${{\bf{w}}_{\rm{t}}}$ is the transmit AWV, and ${{\bf{w}}_{\rm{r}}}$ is the receive AWV.
\end{quote}

\end{algorithmic}
\end{algorithm}

\subsection{Convergence Analysis}
The convergence of IEVD, as well as its training approach, will be evident after proving the following theorem.

\begin{theorem}
Let $\Gamma^{(n)}$ denote the value of $\Gamma$ after the $n$-th iteration. Then $\{\Gamma^{(n)}|n=1,2,...\}$ is a non-descending sequence, i.e., $\Gamma^{(n+1)}\geq \Gamma^{(n)}$.
\end{theorem}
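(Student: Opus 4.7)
The plan is to show that each half-step of the IEVD iteration is an exact one-variable maximization of the array gain $\Gamma$, so the objective cannot decrease over a full iteration. The only ingredient needed is the Rayleigh--Ritz variational principle: for any Hermitian positive semi-definite matrix ${\bf{M}}$, $\max_{\|{\bf{x}}\|_2=1} {\bf{x}}^{\rm{H}} {\bf{M}} {\bf{x}} = \lambda_{\max}({\bf{M}})$, with the maximum attained at a unit-norm principal eigenvector of ${\bf{M}}$.

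Concretely, I would introduce the shorthands ${\bf{M}}_{\rm{t}}({\bf{w}}_{\rm{r}}) = \sum_{\ell=0}^{L-1} {\bf{\hat H}}_\ell^{\rm{H}} {\bf{w}}_{\rm{r}} {\bf{w}}_{\rm{r}}^{\rm{H}} {{\bf{\hat H}}}_\ell$ and ${\bf{M}}_{\rm{r}}({\bf{w}}_{\rm{t}}) = \sum_{\ell=0}^{L-1} {{\bf{\hat H}}}_\ell {\bf{w}}_{\rm{t}} {\bf{w}}_{\rm{t}}^{\rm{H}} {\bf{\hat H}}_\ell^{\rm{H}}$. Each is a sum of rank-one Hermitian positive semi-definite terms and hence itself Hermitian positive semi-definite, and the identities already derived in the paper read $\Gamma = {\bf{w}}_{\rm{t}}^{\rm{H}} {\bf{M}}_{\rm{t}}({\bf{w}}_{\rm{r}}) {\bf{w}}_{\rm{t}} = {\bf{w}}_{\rm{r}}^{\rm{H}} {\bf{M}}_{\rm{r}}({\bf{w}}_{\rm{t}}) {\bf{w}}_{\rm{r}}$. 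Let $({\bf{w}}_{\rm{t}}^{(n)}, {\bf{w}}_{\rm{r}}^{(n)})$ denote the AWVs obtained after the $n$-th full iteration, with corresponding value $\Gamma^{(n)}$.

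In the $(n+1)$-th iteration, Algorithm \ref{alg:sub} first replaces ${\bf{w}}_{\rm{t}}^{(n)}$ by a principal eigenvector ${\bf{w}}_{\rm{t}}^{(n+1)}$ of ${\bf{M}}_{\rm{t}}({\bf{w}}_{\rm{r}}^{(n)})$. Since ${\bf{w}}_{\rm{t}}^{(n)}$ is a unit-norm competitor in the Rayleigh quotient of that matrix, the variational principle gives $({\bf{w}}_{\rm{t}}^{(n+1)})^{\rm{H}} {\bf{M}}_{\rm{t}}({\bf{w}}_{\rm{r}}^{(n)}) {\bf{w}}_{\rm{t}}^{(n+1)} \geq ({\bf{w}}_{\rm{t}}^{(n)})^{\rm{H}} {\bf{M}}_{\rm{t}}({\bf{w}}_{\rm{r}}^{(n)}) {\bf{w}}_{\rm{t}}^{(n)} = \Gamma^{(n)}$. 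The algorithm then replaces ${\bf{w}}_{\rm{r}}^{(n)}$ by a principal eigenvector ${\bf{w}}_{\rm{r}}^{(n+1)}$ of ${\bf{M}}_{\rm{r}}({\bf{w}}_{\rm{t}}^{(n+1)})$, and the same argument applied to the receive-side quadratic form yields $\Gamma^{(n+1)} = ({\bf{w}}_{\rm{r}}^{(n+1)})^{\rm{H}} {\bf{M}}_{\rm{r}}({\bf{w}}_{\rm{t}}^{(n+1)}) {\bf{w}}_{\rm{r}}^{(n+1)} \geq ({\bf{w}}_{\rm{r}}^{(n)})^{\rm{H}} {\bf{M}}_{\rm{r}}({\bf{w}}_{\rm{t}}^{(n+1)}) {\bf{w}}_{\rm{r}}^{(n)}$, whose right-hand side is precisely the intermediate value of $\Gamma$ produced by the transmit update. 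Chaining the two inequalities delivers $\Gamma^{(n+1)} \geq \Gamma^{(n)}$.

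There is no genuine analytic obstacle here: the entire argument is bookkeeping on top of the Rayleigh quotient. The only care needed is to pair each eigenvector update with the quadratic form whose Rayleigh quotient coincides with $\Gamma$ after that update, so that the previous iterate is a legitimate feasible competitor in the maximization. Since $\Gamma$ is non-negative and bounded above (for instance, by $\sum_{\ell} \|{\bf{\hat H}}_\ell\|_{\rm F}^2$ because both AWVs have unit 2-norm), the monotonicity established by the theorem immediately implies convergence of the sequence $\{\Gamma^{(n)}\}$ to some limit. The same argument transfers verbatim to Algorithm \ref{alg:train}: the matrices ${\bf{R}}_{\rm{S}}$ and ${\bf{R}}_{\rm{D}}$ assembled from the received training samples coincide with ${\bf{M}}_{\rm{t}}({\bf{w}}_{\rm{r}})$ and ${\bf{M}}_{\rm{r}}({\bf{w}}_{\rm{t}})$ respectively, and the power-method step produces (in the noise-free limit, for $K$ sufficiently large) the same principal eigenvectors used in IEVD.
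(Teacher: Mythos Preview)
Your proof is correct and follows essentially the same approach as the paper: both split the $(n+1)$-th iteration into its two half-steps, use that each principal-eigenvector update exactly maximizes the corresponding Rayleigh quotient (so the previous iterate is a feasible competitor), and chain the two resulting inequalities through the intermediate value $({\bf{w}}_{\rm{t}}^{(n+1)},{\bf{w}}_{\rm{r}}^{(n)})$. Your explicit invocation of the Rayleigh--Ritz principle and the shorthand matrices ${\bf{M}}_{\rm{t}},{\bf{M}}_{\rm{r}}$ make the write-up slightly cleaner, but the argument is the same.
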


\begin{proof}
Firstly, we have
\begin{equation}
\begin{aligned}
{\Gamma ^{(n)}} &= \sum\limits_{\ell  = 0}^{L - 1} {|{{\left( {{\bf{w}}_{\rm{r}}^{(n)}} \right)}^{\rm{H}}}{{{\bf{\hat H}}}_\ell }{\bf{w}}_{\rm{t}}^{(n)}{|^2}} \\
 &= {\left( {{\bf{w}}_{\rm{t}}^{(n)}} \right)^{\rm{H}}}\left( {\sum\limits_{\ell  = 0}^{L - 1} {{\bf{\hat H}}_\ell ^{\rm{H}}{\bf{w}}_{\rm{r}}^{(n)}{{\left( {{\bf{w}}_{\rm{r}}^{(n)}} \right)}^{\rm{H}}}{{{\bf{\hat H}}}_\ell }} } \right){\bf{w}}_{\rm{t}}^{(n)},
 \end{aligned}
 \end{equation}
where ${{\bf{w}}_{\rm{t}}^{(n)}}$ and ${{\bf{w}}_{\rm{r}}^{(n)}}$ are the transmit and receive AWVs after the $n$-th iteration. According to Algorithm 1, ${{\bf{w}}_{\rm{t}}^{(n+1)}}$ is the principal eigenvector of $\left( {\sum_{\ell  = 0}^{L - 1} {{\bf{\hat H}}_\ell ^{\rm{H}}{\bf{w}}_{\rm{r}}^{(n)}{{\left( {{\bf{w}}_{\rm{r}}^{(n)}} \right)}^{\rm{H}}}{{{\bf{\hat H}}}_\ell }} } \right)$, i.e., the optimal solution to maximize ${\Gamma ^{(n)}}$ given ${{\bf{w}}_{\rm{r}}^{(n)}}$. Let
\begin{equation}
\Gamma _0^{^{(n + 1)}} = {\left( {{\bf{w}}_{\rm{t}}^{(n + 1)}} \right)^{\rm{H}}}\left( {\sum\limits_{\ell  = 0}^{L - 1} {{\bf{\hat H}}_\ell ^{\rm{H}}{\bf{w}}_{\rm{r}}^{(n)}{{\left( {{\bf{w}}_{\rm{r}}^{(n)}} \right)}^{\rm{H}}}{{{\bf{\hat H}}}_\ell }} } \right){\bf{w}}_{\rm{t}}^{(n + 1)},
 \end{equation}
then we have $\Gamma _0^{^{(n + 1)}}\geq {\Gamma ^{(n)}}$.

Let us write $\Gamma _0^{^{(n + 1)}}$ in another form
\begin{equation}
\Gamma _0^{^{(n + 1)}} = {\left( {{\bf{w}}_{\rm{r}}^{(n)}} \right)^{\rm{H}}}\left( {\sum\limits_{\ell  = 0}^{L - 1} {{{{\bf{\hat H}}}_\ell }{\bf{w}}_{\rm{t}}^{(n + 1)}{{\left( {{\bf{w}}_{\rm{t}}^{(n + 1)}} \right)}^{\rm{H}}}{\bf{\hat H}}_\ell ^{\rm{H}}} } \right){\bf{w}}_{\rm{r}}^{(n)}.
\end{equation}
Since ${\left( {{\bf{w}}_{\rm{r}}^{(n+1)}} \right)}$ is the principal eigenvector of $\left( {\sum_{\ell  = 0}^{L - 1} {{{{\bf{\hat H}}}_\ell }{\bf{w}}_{\rm{t}}^{(n + 1)}{{\left( {{\bf{w}}_{\rm{t}}^{(n + 1)}} \right)}^{\rm{H}}}{\bf{\hat H}}_\ell ^{\rm{H}}} } \right)$, i.e., the optimal solution to maximize $\Gamma _0^{^{(n + 1)}}$ given ${\left( {{\bf{w}}_{\rm{t}}^{(n+1)}} \right)}$, we have ${\Gamma ^{(n+1)}}\geq \Gamma _0^{^{(n + 1)}}$, because
\begin{equation}
\begin{aligned}
 {\Gamma ^{(n + 1)}} &= \sum\limits_{\ell  = 0}^{L - 1} {|{{\left( {{\bf{w}}_{\rm{r}}^{(n + 1)}} \right)}^{\rm{H}}}{{{\bf{\hat H}}}_\ell }{\bf{w}}_{\rm{t}}^{(n + 1)}{|^2}}  \\
  &= {\left( {{\bf{w}}_{\rm{r}}^{(n + 1)}} \right)^{\rm{H}}}\left( {\sum\limits_{\ell  = 0}^{L - 1} {{{{\bf{\hat H}}}_\ell }{\bf{w}}_{\rm{t}}^{(n + 1)}{{\left( {{\bf{w}}_{\rm{t}}^{(n + 1)}} \right)}^{\rm{H}}}{\bf{\hat H}}_\ell ^{\rm{H}}} } \right){\bf{w}}_{\rm{r}}^{(n + 1)}. \\
 \end{aligned}
 \end{equation}

Therefore, we have ${\Gamma ^{(n+1)}}\geq \Gamma _0^{^{(n + 1)}}\geq {\Gamma ^{(n)}}$.
\end{proof}

According to (\ref{eq_op_Gamma}), $\Gamma<\infty$. Thus, $\{\Gamma^{(n)}|n=1,2,...\}$ converges to a sub-optimal value, which guarantee the convergence of IEVD and its training approach.

The convergence rate of IEVD and its training approach is fast. From the iteration process it can be observed that a temporary sub-optimal transmit and receive AWV can be directly found out in each iteration, rather than progressing by only a small step towards the ultimate directions like Newton's method. Fig. \ref{fig:convergence_examp} shows a simple example, which is a surface of $z=-(x-1)^2-(y-1)^2$. With the \emph{alternating optimization} method, like IEVD, the optimal solution can be quickly achieved. That is, given $x$, to optimize $y$, and given the optimized $y$, to optimize $x$. However, with Newton's method, many iterations are required to reach the optimal solution, because each time only a small step can be made to approach the optimal solution.

\begin{figure}[t]
\begin{center}
  \includegraphics[width=\figwidth cm]{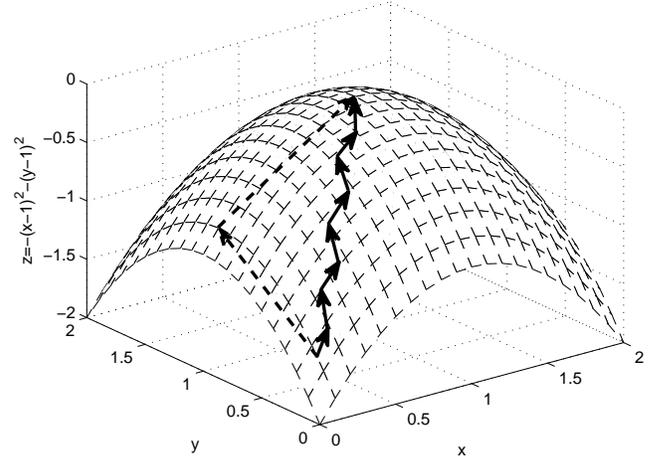}
  \caption{Illustration of the convergence process of IEVD. The arrows with solid and dash lines represent that of Newton's method and IEVD, respectively.}
  \label{fig:convergence_examp}
\end{center}
\end{figure}

\section{Multipath Grouping Scheme}
It is clear that the sub-optimal schemes, including both Nsenga's scheme and the proposed IEVD scheme, require time-consuming estimation of CSI and complicated EVD computation. The training approach for IEVD reduces overhead and computational complexity; thus it is more practical in slow fading scenario where $\lambda_\ell$ varies slowly. However, in fast fading scenario, even the training approach may not be practically applicable, not to mention Nsenga's and IEVD, which are CSI-required, because the training process and CSI estimation will need to be launched frequently due to fast variation of the channel, which will greatly decrease the system efficiency.

In fact, it is not possible to optimize the instantaneous SNR in fast fading scenario due to the unknown instantaneous CSI. In such a case, average SNR may be a reasonable optimization object, but it mingles with diversity gain, i.e., a higher average SNR may lead to a lower diversity gain. We hope to achieve both high array gain and diversity gain, so as to improve both average SNR and robustness of the system.

It is noted that in fast fading scenario, $\lambda_\ell$ varies fast, but ${{\bf{g}}_\ell }$ and ${{\bf{h}}_\ell }$, the receive and transmit steering vectors, can still be well estimated and treated known \emph{a priori}, because they vary slowly as mentioned in Section II. By utilizing this feature, Park and Pan proposed a scheme to achieve diversity gain in addition to array gain, without \emph{a priori} knowledge of $\lambda_\ell$. Although their scheme is simple and achieve full diversity, it may be not efficient in array gain, and may be infeasible when the number of MPCs is larger than that of the antennas at either ends. Hence, we propose an improved diversity scheme for beamforming by exploiting MPG. In this section, we will first briefly introduce Park and Pan's scheme (Park-Pan), and then introduce the proposed MPG scheme.

\subsection{The Diversity Scheme by Park and Pan}
The main beamforming approach for MMWC is to beamform toward only the direction of a single path to achieve array gain, but diversity cannot be achieved. In \cite{Park_2012_beam_diversity}, multiple beams, which steer at different multipath directions, are concurrently shaped at both the transmitter and receiver to achieve beam diversity. It is noted that a beamforming scheme steering at different multipath directions means that the AWVs at both ends are set to have gains on the steering vectors of these MPCs. For instance, if the transmitter and receiver beamform towards only the direction of the $k$-th MPC, the AWVs are ${\bf{w}}_{\rm{t}}={\bf{h}}_k$ and ${\bf{w}}_{\rm{r}}={\bf{g}}_k$, respectively. While if the transmitter and receiver beamform towards multiple MPCs, the gains along the steering vectors of these MPCs need to be set first at the transmitter and receiver, respectively, and then the AWVs can be computed according to these gains.

 Letting ${\alpha _\ell }$ and ${\beta _\ell }$ denote the gains along the directions of the $\ell$-th MPC at the receiver and transmitter, respectively, we have
\begin{equation}
\begin{aligned}
 y[m] = \sqrt {{{{n_{\rm{r}}}{n_{\rm{t}}}\gamma }}} \sum\limits_{\ell  = 0}^{L - 1} {{\alpha _\ell }{\beta _\ell }{\lambda _\ell }s[m - {\tau _\ell }]}  + {{\bf{w}}_{\rm{r}}^{\rm{H}}}{\bf{n}}, \\
 \end{aligned}
\end{equation}
where \begin{equation}
{\alpha _\ell } = {\bf{w}}_{\rm{r}}^{\rm{H}}{{\bf{g}}_\ell },~~{\beta _\ell } = {\bf{h}}_\ell ^{\rm{H}}{{\bf{w}}_{\rm{t}}}.\end{equation}

As in the considered fast fading scenario $\lambda_\ell$ are assumed unavailable at both ends, but ${{\bf{g}}_\ell }$ and ${{\bf{h}}_\ell }$ are known \emph{a priori} at the receiver and the transmitter, respectively, the receive and transmit antenna gains along the directions of different MPCs can be simply and naturally set as ${\alpha _\ell}={\beta _\ell}=1$ for $\ell=0,1,...,L-1$.

Let ${\bf{b}}$ and ${\bf{a}}$ denote the transmit and receive gain vectors, respectively. We have
\begin{equation}
{\bf{b}} = {[{\beta _0},{\beta _1},...,{\beta _{L - 1}}]^{\rm{T}}}
~\text{and}~~
{\bf{a}} = {[{\alpha _0},{\alpha _1},...,{\alpha _{L - 1}}]^{\rm{T}}}.
\end{equation}
The receive and transmit AWVs are consequently achieved as
\begin{equation} \label{eq_awvr}
{{\bf{w}}_{\rm{r}}} = ({{\bf{\bar G}}^{\rm{H}}})_{\rm{R}}^{\dag}\,{{\bf{a}}^ * }={\bf{\bar G}}{\left( {{{\bf{\bar G}}^{\rm{H}}}{\bf{\bar G}}} \right)^{ - 1}}{{\bf{a}}^ * },
\end{equation}
and
\begin{equation}
{{\bf{w}}_{\rm{t}}} = ({{\bf{\bar H}}^{\rm{H}}})_{\rm{R}}^{\dag}\,{\bf{b}}={\bf{\bar H}}{\left( {{{\bf{\bar H}}^{\rm{H}}}{\bf{\bar H}}} \right)^{ - 1}}{\bf{b}},
\end{equation}
respectively, where $(\cdot)^*$ is the conjugation operation, $(\cdot)_{\rm{R}}^{\dag}$ represents the right pseudo-inverse,
\begin{equation}
{\bf{\bar G}} = [{{\bf{g}}_0},{{\bf{g}}_1},...,{{\bf{g}}_{L - 1}}],~\text{and}~
{\bf{\bar H}} = [{{\bf{h}}_0},{{\bf{h}}_1},...,{{\bf{h}}_{L - 1}}].
\end{equation}
The final transmit and receive AWVs require a normalization on the obtained ${{\bf{w}}_{\rm{t}}}$ and ${{\bf{w}}_{\rm{r}}}$, respectively, for the unit 2-norm constraint.

Park-Pan achieves full diversity and simple to implement in the case of $L\leq \min(\{n_{\rm{t}},n_{\rm{r}}\})$. In such a case, ${{\bf{\bar G}}^{\rm{H}}}$ and ${{\bf{\bar H}}^{\rm{H}}}$ are both row-rank matrices, and $({{\bf{\bar G}}^{\rm{H}}})_{\rm{R}}^{\dag}$ and $({{\bf{\bar H}}^{\rm{H}}})_{\rm{R}}^{\dag}$ exist, which guarantees a solution of AWV. However, when $L> \min(\{n_{\rm{t}},n_{\rm{r}}\})$, $({{\bf{\bar G}}^{\rm{H}}})_{\rm{R}}^{\dag}$ or $({{\bf{\bar H}}^{\rm{H}}})_{\rm{R}}^{\dag}$ may not exist, because ${\left( {{{\bf{\bar G}}^{\rm{H}}}{\bf{\bar G}}} \right)^{ - 1}}$ or ${\left( {{{\bf{\bar H}}^{\rm{H}}}{\bf{\bar H}}} \right)^{ - 1}}$ in (\ref{eq_awvr}) may not exist, which means that the solution of AWV may not exist. Therefore, this scheme is not applicable in such a case.

Additionally, Park-Pan may be not effective when the MPCs have close steering angles. To illustrate this, let us look at the equation in the singular-vector space at the transmitter, i.e.,
\[{{\bf{\bar H}}^{\rm{H}}}{{\bf{w}}_{\rm{t}}} = {\bf{b}} \Rightarrow {{{\bf{\bar S}}}^{\rm{H}}}{{{\bf{\bar U}}}^{\rm{H}}}{{\bf{w}}_{\rm{t}}} = {{{\bf{\bar V}}}^{\rm{H}}}{\bf{b}},\]
where ${\bf{\bar H}} = {\bf{\bar U\bar S}}{{{\bf{\bar V}}}^{\rm{H}}}$ is the SVD of $\bf{\bar H}$, ${\bf{\bar U}} = {\{ {{{\bf{\bar u}}}_i}\} _{i = 1,2,...,{n_{\rm{t}}}}}$, ${\bf{\bar V}} = {\{ {{{\bf{\bar v}}}_i}\} _{i = 1,2,...,L}}$, ${\bf{\bar S}} = {\mathop{\rm diag}\nolimits} ([{\sigma _1},{\sigma _2},...,{\sigma _L}])$. Letting ${{{\bf{\tilde w}}}_{\rm{t}}} = {{{\bf{\bar U}}}^{\rm{H}}}{{\bf{w}}_{\rm{t}}} = {\{ {{\tilde w}_{{\rm{t}}i}}\} _{i = 1,2,...,{n_{\rm{t}}}}}$ and ${\bf{\tilde b}} = {{{\bf{\bar V}}}^{\rm{H}}}{\bf{b}} = {\{ {{\tilde b}_i}\} _{i = 1,2,...,L}}$, we have ${{\tilde w}_{{\rm{t}}i}} = {{\tilde b}_i}/\sigma _i^*,\;\;i = 1,2,...,L$. Note that $|{{\tilde w}_{{\rm{t}}i}}|^2$ and $\sigma _i^2$ denotes the transmission power allocated in the dimension of ${{\bf{\bar u}}}_i$ and the power gain in this dimension, respectively. Naturally, in order to achieve diversity gain and array gain, the transmission power should be evenly allocated in multiple dimensions with good gains. However, for Park-Pan, the power allocation is ineffective, because less power is allocated in the dimension with a greater gain $\sigma_i$. When the steering angles of MPCs are largely spaced, the steering vectors are approximately orthogonal. Thus, the condition number of ${\bf{\bar H}}$, i.e., $\sigma_1/\sigma_L$, is small, and Park-Pan achieves good performance. However, when MPCs have close steering vectors, the condition number of ${\bf{\bar H}}$ will be very large, which means most of the power is allocated in the dimension of ${{\bf{\bar u}}}_L$, which has the smallest gain. In such a case, the achieved array gain will be poor.

\subsection{MPG Scheme}
It is found that Park-Pan may be infeasible when there are too many MPCs, and may be not effective when there are MPCs with close steering angles, which makes the condition number of ${\bf{\bar H}}$ bad. Consequently, it probably benefits to group the MPCs with close steering angles to shape a single equivalent MPC, which will maintain a good condition number, because the equivalent MPCs are basically widely spaced. This is the motivation of the MPG scheme. As this equivalent MPC represents all the MPCs within the corresponding group in AWV computations, the MPG operation can not only avoid the ineffectiveness caused by these MPCs with close angles, but also reduce the number of involved MPCs. In this subsection, we present the MPG beamforming scheme. As shown in Fig. \ref{fig:scheme}, in our scheme the MPCs are grouped according to their transmit and receive steering angles at the transmitter and receiver, respectively. For each non-empty group, an equivalent steering vector is defined, and the corresponding antenna gain is set.

\begin{figure}[t]
\begin{center}
  \includegraphics[width=7.6 cm]{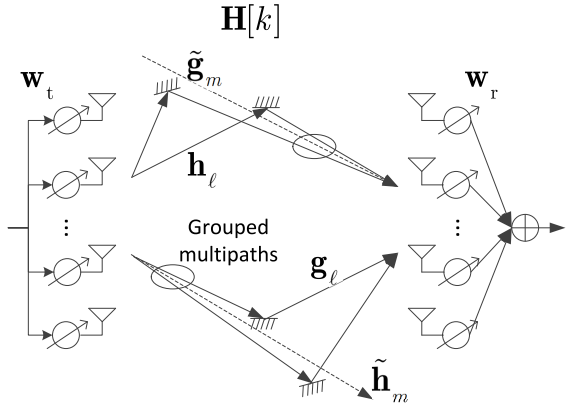}
  \caption{Illustration of the MPG schemes.}
  \label{fig:scheme}
\end{center}
\end{figure}

\subsubsection{Transmit MPG}
An MPG scheme includes both transmit and receive MPCs. Let us look at the transmitter first. The angle range of the transmit steering vector is $[-1~1)$. Uniformly divide the angle range into $n_{\rm{t}}$ segments with a spacing $2/n_{\rm{t}}$, where the angle range of the $i$-th segment is $[-1+{2}(i-1)/{n_{\rm{t}}}~-1+{2}i/{n_{\rm{t}}})$. As each MPC corresponds to a transmit steering vector, they are grouped into $n_{\rm{t}}$ groups according to their steering angles ${{\Omega _{{\rm{t}}\ell }}}$. If ${{\Omega _{{\rm{t}}\ell }}}$ fall within the $i$-th angle segment, the corresponding MPCs are grouped into the $i$-th group. For each group that has at least one MPC in it, i.e., non-empty, an equivalent steering vector is defined. Subsequently, when calculating the transmit AWV, each non-empty group is represented by its corresponding equivalent steering vector.

As an equivalent steering vector represents all the MPCs within the corresponding group in AWV computations, it must have a significant correlation with these MPCs, i.e., the inner product between the equivalent steering vector and the steering vector of any one MPC in this group is significant. There are many approaches to define the equivalent steering vectors for the non-empty groups. For example, a natural way is to define a new steering vector with a steering angle being the average angle of the steering angles of the MPCs within this group. However, this method requires estimation of all the steering angles of the MPCs, which is with high complexity and not practical. Therefore, we suggest an approach that is natural but simple to implement to define the equivalent steering vectors.

Let $\{N^{(i)}_j|j=1,2,...,N_i\}$ denote the indices of the transmit steering vectors ${\bf{h}}_\ell$ falling within the $i$-th non-empty group, the equivalent steering vector for this group is defined as

\begin{equation}
{{{\bf{\tilde h}}}_i} = {\left({\sum\limits_{j = 1}^{{N_i}} {{{\bf{h}}_{N_j^{(i)}}}} }\right)}\Big{/}{{\Big{\|}\sum\limits_{j = 1}^{{N_i}} {{{\bf{h}}_{N_j^{(i)}}}\Big{\|}}_2 }}.
\end{equation}
The corresponding antenna gain in the direction of this equivalent steering vector is simply set as ${{\tilde \beta }_i} = 1$.

For the groups that have no multipath in them, i.e., empty, there are no equivalent steering vectors or antenna gains. Hence, the total number of equivalent transmit steering vectors, $N_{\rm{T}}$, is no larger than $n_{\rm{t}}$. Lastly, the transmit AWV is obtained as
\begin{equation} \label{eq_wt}
{{\bf{w}}_{\rm{t}}} = {\bf{\tilde H}}{\left( {{{{\bf{\tilde H}}}^{\rm{H}}}{\bf{\tilde H}}} \right)^{ - 1}}{\bf{\tilde b}},
\end{equation}
where
\begin{equation}
{\bf{\tilde H}} = [{{{\bf{\tilde h}}}_0},{{{\bf{\tilde h}}}_1},...,{{{\bf{\tilde h}}}_{{N_{\rm{T}}} - 1}}],
\end{equation}
and
\begin{equation}{\bf{\tilde b}} = {[{{\tilde \beta }_0},{{\tilde \beta }_1},...,{{\tilde \beta }_{N_{\rm{T}} - 1}}]^{\rm{T}}}=[1,1,...,1]^{\rm{T}}.
\end{equation}
The final transmit AWV requires a normalization on the obtained ${{\bf{w}}_{\rm{t}}}$ for the unit 2-norm constraint.

It is noted that as $N_{\rm{T}}\leq n_{\rm{t}}$, ${( {{{{\bf{\tilde H}}}^{\rm{H}}}{\bf{\tilde H}}} )^{ - 1}}$ exists almost surely. In other words, the transmit multipath grouping leads to a solution of transmit AWV almost surely.

\subsubsection{Receive MPG}
Analogously, at the receiver we also conduct multipath grouping.  Uniformly divide the angle range into $n_{\rm{r}}$ segments with a spacing $2/n_{\rm{r}}$, where the angle range of the $i$-th segment is $[-1+{2}(i-1)/{n_{\rm{r}}}~-1+{2}i/{n_{\rm{r}}})$. As each MPC also corresponds to a receive steering vector, they are grouped into $n_{\rm{r}}$ groups according to their steering angles ${{\Omega _{{\rm{r}}\ell }}}$. If ${{\Omega _{{\rm{r}}\ell }}}$ fall within the $i$-th angle segment, the corresponding MPCs are grouped into the $i$-th group. For each group that has at least one MPC in it, i.e., non-empty, an equivalent steering vector is similarly defined. Let $\{M^{(i)}_j|j=1,2,...,M_i\}$ denote the indices of the receive steering vectors ${\bf{g}}_\ell$ falling within the $i$-th non-empty group, the equivalent steering vector for this group is defined as
\begin{equation}
{{{\bf{\tilde g}}}_i} = {\left({\sum\limits_{j = 1}^{{M_i}} {{{\bf{g}}_{M_j^{(i)}}}} }\right)}\Big{/}{{\Big{\|}\sum\limits_{j = 1}^{{M_i}} {{{\bf{g}}_{M_j^{(i)}}}\Big{\|}_2} }}.
\end{equation}
The corresponding antenna gain in the direction of this equivalent steering vector is simply set as ${{\tilde \alpha }_i} = 1$.

For the groups that have no multipath in them, there are no equivalent steering vectors or antenna gains. Hence, the total number of equivalent transmit steering vectors, $M_{\rm{R}}$, is no larger than $n_{\rm{r}}$. Lastly, the receive AWV is obtained as
\begin{equation}
{{\bf{w}}_{\rm{r}}} = {\bf{\tilde G}}{\left( {{{{\bf{\tilde G}}}^{\rm{H}}}{\bf{\tilde G}}} \right)^{ - 1}}{\bf{\tilde a}}^*,
\end{equation}
where
\begin{equation}
{\bf{\tilde G}} = [{{{\bf{\tilde g}}}_0},{{{\bf{\tilde g}}}_1},...,{{{\bf{\tilde g}}}_{{M_{\rm{R}}} - 1}}],
\end{equation}
and
\begin{equation}
{\bf{\tilde a}} = {[{{\tilde \alpha }_0},{{\tilde \alpha }_1},...,{{\tilde \alpha }_{M_{\rm{R}} - 1}}]^{\rm{T}}}=[1,1,...,1]^{\rm{T}}.
\end{equation}
The final receive AWV requires a normalization on the obtained ${{\bf{w}}_{\rm{r}}}$ for the unit 2-norm constraint.

 Likewise, since $M_{\rm{R}}\leq n_{\rm{r}}$, ${( {{{{\bf{\tilde G}}}^{\rm{H}}}{\bf{\tilde G}}} )^{ - 1}}$ exists almost surely, which leads to a solution of receive AWV almost surely.

\subsubsection{Realization} The realization of the MPG scheme seems to require angle estimation at the transmitter and the receiver, which is complicated. In fact, the peak of the spatial spectrum from a Bartlett beamformer can be used instead of more complicated DoD and DoA estimations. Take the transmit MPG for instance. Predefine $n_{\rm{t}}$ unit vectors ${\bf{v}}_i$, $i=1,2,...,n_{\rm{t}}$:
\begin{equation}
\begin{aligned}
{{\bf{v}}_i}=& \frac{1}{{\sqrt {{n_{\rm{t}}}} }}[{e^{j\pi 0*(-1+(2i-1)/n_{\rm{t}})}},{e^{j\pi 1*(-1+(2i-1)/n_{\rm{t}})}},\\
&{e^{j\pi 2*(-1+(2i-1)/n_{\rm{t}})}},...,{e^{j\pi ({n_{\rm{t}}} - 1)*(-1+(2i-1)/n_{\rm{t}})}}]^{\rm{T}},
\end{aligned}
\end{equation}
For an arbitrary MPC with transmit angle ${{\Omega _{{\rm{t}}\ell }}}$, its angle locates in the $i$-th segment, i.e., $[-1+{2}(i-1)/{n_{\rm{t}}}~-1+{2}i/{n_{\rm{t}}})$, is equivalent to that its steering vector ${{\bf{h}}_\ell }$ satisfies $|{\bf{v}}_i^{\rm{H}}{{\bf{h}}_\ell }| > |{\bf{v}}_j^{\rm{H}}{{\bf{h}}_\ell }{|_{j \ne i}}$. Hence, the alternative approach to realize MPG is to estimate the steering vectors ${{\bf{h}}_\ell }$, $\ell=1,2,...,L$, at the transmitter, and for each steering vector ${{\bf{h}}_\ell }$, find the index of the unit vectors that ${\mathop{\rm maximizes}\nolimits} \;\mathop {\arg }\limits_i \;|{\bf{v}}_i^{\rm{H}}{{\bf{h}}_\ell }|$. The index is the group number of ${{\bf{h}}_\ell }$. The receive MPC can be realized in the same way.

There are many ways to estimate the transmit and receive steering vectors, which are much simpler than to estimate full CSI. One way is to estimate them by the first iteration process of Algorithm \ref{alg:train}. In the first iteration process, we can obtain ${\bf{r}}[\ell ]$ and ${\bf{\bar r}}[\ell ]$, respectively. In fact,
\begin{equation}
{\bf{r}}[\ell ] = {\bf{\hat H}}_\ell ^{\rm{H}}{{\bf{w}}_{\rm{r}}} = \left( {\sqrt {{n_{\rm{r}}}{n_{\rm{t}}}} {\lambda _\ell }{\bf{g}}_\ell ^{\rm{H}}{{\bf{w}}_{\rm{r}}}} \right){{\bf{h}}_\ell },
\end{equation}
and
\begin{equation}
{\bf{\bar r}}[\ell ] = {{{\bf{\hat H}}}_\ell }{{\bf{w}}_{\rm{t}}} = \left( {\sqrt {{n_{\rm{r}}}{n_{\rm{t}}}} {\lambda _\ell }{\bf{h}}_\ell ^{\rm{H}}{{\bf{w}}_{\rm{t}}}} \right){{\bf{g}}_\ell }.
\end{equation}
Thus, by normalizing ${\bf{r}}[\ell ]$ and ${\bf{\bar r}}[\ell ]$, we get ${{\bf{h}}_\ell }$ and ${{\bf{g}}_\ell }$ at the transmitter (the source in Algorithm \ref{alg:train}) and receiver (the destination in Algorithm \ref{alg:train}), respectively.

It is noted that to use MPG in practice there are many practical system issues to be considered, and a critical one is array calibration. To do accurate spectrum estimation, the phase and amplitude responses of each antenna need to be known. Although these could be measured ahead of time since there is only one active RF chain and the multiple channels are passive, the calibration on the responses of these channels can be a rather tricky issue because the gain is adjustable on each channel. As array calibration is not the focus of this paper, relevant literatures, such as \cite{Keizer2011} and \cite{Ng2009}, are referred for further considerations.

\subsection{Performance Analysis}
Since PEP is an extensively used metric to reflect both array gain and diversity gain \cite{TseFundaWC}, we adopt it in this context. It is clear that no matter whether MPG or Park-Pan is adopted, an equivalent SISO multipath fading channel is observed after the single-layer beamforming, which yields
\begin{equation} \label{eq_siso}
h[k] = \sum\limits_{m = 0}^v {{h_m}\delta [k - m]},
\end{equation}
where $v \geq\max (\{ {\tau _\ell }|\ell=0,1,...,L-1\} )$, and
\begin{equation}
{h_m} = \left\{ \begin{aligned}
 &{\bf{w}}_{\rm{r}}^{\rm{H}}{{{\bf{\hat H}}}_\ell }{{\bf{w}}_{\rm{t}}},~&m = {\tau _\ell }, \\
 &0,&{\rm{otherwise}}. \\
 \end{aligned} \right.
 \end{equation}

With the above SISO model (\ref{eq_siso}), single-carrier block transmission, such as the zero-padded (ZP) block transmission \cite{ohno2006performance}, is adopted to evaluate the system performance. With the ZP, the minimum mean square error (MMSE) receiver can be used to collect the multipath diversity \cite{ohno2006performance}, and the pairwise-error probability (PEP) is upper bounded as
\begin{equation} \label{eq_pep}
\begin{aligned}
&P\{ {{\bf{s}}_A} \to {{\bf{s}}_B}\} \lessapprox {\mathbb{E}_{\lambda_\ell}}\left( {Q\left( {\frac{{d\sqrt {\gamma \sum_{\ell  = 0}^{L - 1} {|{h_{{\tau _\ell }}}{|^2}} } }}{{\sqrt {2{N_0}} }}} \right)} \right)\\
=& {\mathbb{E}_{\lambda_\ell}}\left( {Q\left( {d\sqrt {\frac{\gamma }{2}\sum\limits_{\ell  = 0}^{L - 1} {|{\bf{w}}_r^H{{\bf{g}}_\ell }{\lambda _\ell }{\bf{h}}_\ell ^{\rm{H}}{{\bf{w}}_t}{|^2}} } } \right)} \right),
\end{aligned}
\end{equation}
where ${\mathbb{E}_{\rm{\lambda_\ell}}}$ is the expectation on the channel coefficients $\lambda_\ell$ which are fast fading, $Q(x)$ is the Q function, $d$ is the minimum distance of the signal constellation. Note that the ZP based block transmission achieves the best performance among all block based transmission systems for an SISO channel \cite{ohno2006performance}.

Since when $a>1$, the Q function $Q(a)$ has the upper bound $e^{-a^2/2}$ \cite{TseFundaWC}, when $\gamma$ is large enough, we further have
\begin{equation} \label{eq_pep1}
\begin{aligned}
 P\{ {{\bf{s}}_A} \to {{\bf{s}}_B}\}  &\le {\mathbb{E}_{{\lambda _\ell }}}\left( {\exp \left( { - {d^2}\frac{\gamma }{4}\sum\limits_{\ell  = 0}^{L - 1} {|{\bf{w}}_r^H{{\bf{g}}_\ell }{\lambda _\ell }{\bf{h}}_\ell ^{\rm{H}}{{\bf{w}}_t}{|^2}} } \right)} \right) \\
  &= {\mathbb{E}_{{\lambda _\ell }}}\left( {\prod\limits_{\ell  = 0}^{L - 1} {\exp \left( { - \frac{{{d^2}\gamma }}{4}|{\bf{w}}_r^H{{\bf{g}}_\ell }{\lambda _\ell }{\bf{h}}_\ell ^{\rm{H}}{{\bf{w}}_t}{|^2}} \right)} } \right) \\
  &= \prod\limits_{\ell  = 0}^{L - 1} {{\mathbb{E}_{{\lambda _\ell }}}\left( {\exp \left( { - \frac{{{d^2}\gamma }}{4}|{\bf{w}}_r^H{{\bf{g}}_\ell }{\bf{h}}_\ell ^{\rm{H}}{{\bf{w}}_t}{|^2}|{\lambda _\ell }{|^2}} \right)} \right)}  \\
  &= \prod\limits_{\ell  = 0}^{L - 1} {\left( {\frac{1}{{1 + {d^2}\gamma |{\bf{w}}_r^H{{\bf{g}}_\ell }{\bf{h}}_\ell ^{\rm{H}}{{\bf{w}}_t}{|^2}/4}}} \right)}\buildrel \Delta \over = {P_{{\rm{UB}}}}, \\
\end{aligned}
\end{equation}
which is suitable for both MPG and Park-Pan. Note that the difference in the AWV setting will lead to different PEP performance between these two schemes.

According to the definition in \cite{TseFundaWC}, the diversity gain $D$ can thus be derived as
\begin{equation} \label{eq_diversity}
\begin{aligned}
 D &=  - \mathop {\lim }\limits_{\gamma  \to \infty } \frac{{\log {P_{{\rm{UB}}}}}}{{\log \gamma }} \\
  &=  - \mathop {\lim }\limits_{\gamma  \to \infty } \frac{{\log \prod\limits_{\ell  = 0}^{L - 1} {\left( {\frac{1}{{1 + {d^2}\gamma |{\bf{w}}_r^H{{\bf{g}}_\ell }{\bf{h}}_\ell ^{\rm{H}}{{\bf{w}}_t}{|^2}/2}}} \right)} }}{{\log \gamma }} \\
  &= L, \\
\end{aligned}
\end{equation}
which means that both MPG and Park-Pan achieve full diversity.

\section{Performance Comparison}
In this section, we conduct extensive comparisons between IEVD and Nsenga's scheme, which both are sub-optimal schemes that require real-time full or partial CSI, as well as MPG and Pank-Pan, which both are simpler schemes feasible even in fast fading scenario with only the steering vectors known \emph{a priori}. This section is organized as follows. First, we investigate the convergence rate of the proposed IEVD scheme, as well as that of its training approach. Afterwards, we compare the performances of these schemes and discuss the complexity issue.

\subsection{Convergence Rates of IEVD and its Training Approach}

The convergence rates of IEVD and its training approach are shown in Fig. \ref{fig:convergence} with $n_{\rm{t}}=n_{\rm{r}}=8$. Recall that $L$ is the number of MPCs, and $K$ is the power of the square matrices to approximate the corresponding EVD. These curves are obtained via simulations over $10^5$ randomly realized channels according to the model in (\ref{eq_channel}). In these realizations, the steering angles $\Omega_{{\rm{t}}\ell}$ and $\Omega_{{\rm{r}}\ell}$, which determine the steering vectors, obey a uniform distribution within $[-1~1)$; while the MPC coefficients $\{\lambda_\ell\}$ follow a complex Gaussian distribution with zero mean and variance $1/L$. In each channel realization, we obtain an instantaneous power gain $\Gamma$. The final average power gain is computed by averaging the $10^5$ instantaneous power gains.

It is observed from Fig. \ref{fig:convergence} that both IEVD and its training approach converge fast. As the initial AWVs are randomly generated, there is basically no power gain when the number of iterations is 0. When $L=1$, i.e., there is only one significant MPC, both IEVD and its training approach converge with only 1 iteration, even with $K=1$. When $L>1$, IEVD converges with only two iterations; while its training approach converges depending on $K$. It is clear that the training approach converges faster when $K$ is greater, which means a higher computational complexity. Fortunately, when $K=2$, the training approach can basically achieve convergence with 2 iterations. A small $K$ and a fast convergence rate increase the practically applicability of the training approach.

On the other hand, it is found that as $L$ increases, the obtained average gain becomes smaller, which means that the array gain becomes lower. However, as $L$ increases, more diversity gain is achieved. The metric of power gain cannot reflect diversity gain. Thus, we need to adopt alternative metrics to evaluate the performance.

\begin{figure}[{\rm{t}}]
\begin{center}
  \includegraphics[width=\figwidth cm]{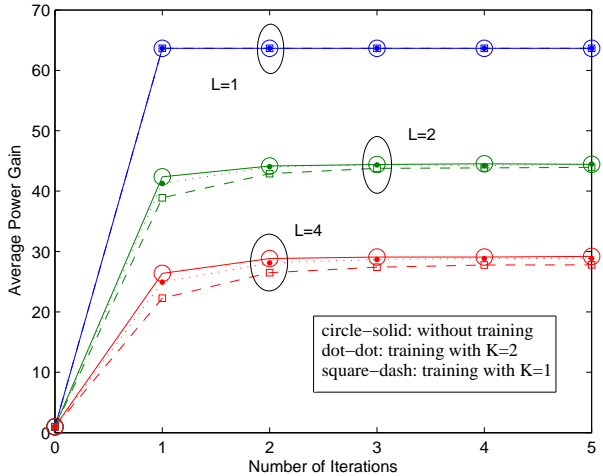}
  \caption{Convergence rates of IEVD and its training approach. $n_{\rm{t}}=n_{\rm{r}}=8$. $L$ is the number of MPC, and $K$ is the power of the square matrices to approximate the corresponding EVD.}
  \label{fig:convergence}
\end{center}
\end{figure}

\subsection{Performance Comparisons}


We next want to obtain and compare the numerical results of the upper bounded PEP for the four involved schemes, namely IEVD, Nsenga's, MPG, and Park-Pan, respectively. It is noted that the training approach of IEVD has the same performance with IEVD in the convergence state given that $K$ is large enough, e.g., $K\geq 2$, and the iteration number is 3 in all the simulations in this subsection. The PEP upper bounds are achieved by Monte Carlo simulation under different scenarios according to (\ref{eq_pep}), which is more accurate than (\ref{eq_pep1}). The steering angles $\Omega_{{\rm{t}}\ell}$ and $\Omega_{{\rm{r}}\ell}$ are either deterministic or random. For deterministic steering angles, only the channel coefficients $\lambda_\ell$ are random. Each realization of $\lambda_\ell$ determines an instantaneous bounded PEP according to (\ref{eq_pep}), and an average bounded PEP is obtained by adopting their mean. For random steering angles, each realization of $\Omega_{{\rm{t}}\ell}$, $\Omega_{{\rm{r}}\ell}$ and $\lambda_\ell$ determines an instantaneous bounded PEP, and an average bounded PEP (termed PEP for short hereafter) is achieved by the same way. In the simulations, $n_{\rm{t}}=n_{\rm{r}}=8$. Quadrature phase shift keying (QPSK) modulation is adopted; thus, $d=\sqrt{2}$.

We first investigate the case that the MPCs have the same transmit steering angle but different receive steering angles. This case corresponds to the scenario that the transmitter is far away from the receiver, and there are many reflectors around the receiver. Afterwards, we investigate the case that the MPCs have different transmit and receive steering angles. This case corresponds to the scenario that the transmitter is not far away from the receiver with many reflectors nearby.

Figs. \ref{fig:fig1} and \ref{fig:fig2} depict the PEP comparisons between the schemes with the same single steering angle at the transmitter and variable steering angles at the receiver, with deterministic and random steering angles, respectively. In the deterministic case, the steering angles are equally spaced, i.e., ${\Omega _{{\rm{t}}\ell }} = {\Omega _{{\rm{r}}\ell }}=  - 1 + {2}(\ell  - 1)/{L}$. While in the random case, the steering angles obey a uniform distribution within $[-1~1)$.

Regarding the case of deterministic steering angles, all the four schemes achieve full diversity. The PEP performance of MPG is exactly the same as that of Park-Pan, as shown in Fig. \ref{fig:fig1}. This is because when $L\leq\min(\{n_{\rm{t}},n_{\rm{r}}\})$, all the MPCs fall within different groups at the receiver, and thus there is actually no such operations that multiple MPCs are grouped into a single one; consequently, the equivalent receive steering vectors of MPG are the same as that of Park-Pan. Also, there is in fact no MPG operation at the transmitter, because the MPCs has the same one transmit steering angle. On the other hand, the PEP performance of IEVD is exactly the same as that of Nsenga's, and achieves a better array gain than that of MPG and Park-Pan, which shows that both sub-optimal schemes are effective. When $L=1$, the four schemes have the same performance, which is optimal, because there is only one steering angle at the transmitter and receiver.

Regarding the case of random steering angles, all the four schemes also achieve full diversity. It is observed that, in such a case, the MPG scheme shows superiority over Park-Pan in array gain, as shown in Fig. \ref{fig:fig2}. As the steering angles are randomly generated at the receiver, MPCs with close receive steering angles are grouped into a single equivalent MPC in the MPG scheme. Thus, the superiority of MPG versus Park-Pan indicates that the multipath grouping operation achieves an array gain, and as $L$ increases, the superiority becomes more significant. On the other hand, IEVD also achieves the same performance as Nsenga's, and has an increasingly better array gain than MPG and Park-Pan.

\begin{figure}[{\rm{t}}]
\begin{center}
  \includegraphics[width=\figwidth cm]{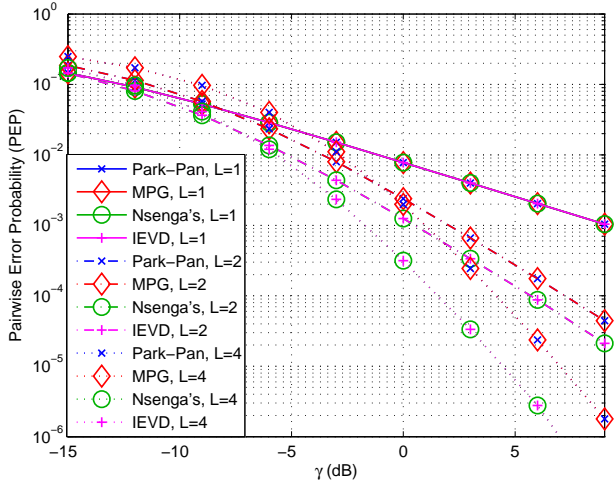}
  \caption{PEP comparisons between the schemes with one same single steering angle at the transmitter and variable steering angles at the receiver. The steering angles at the receiver are deterministic and equally spaced.}
  \label{fig:fig1}
\end{center}
\end{figure}

\begin{figure}[{\rm{t}}]
\begin{center}
  \includegraphics[width=\figwidth cm]{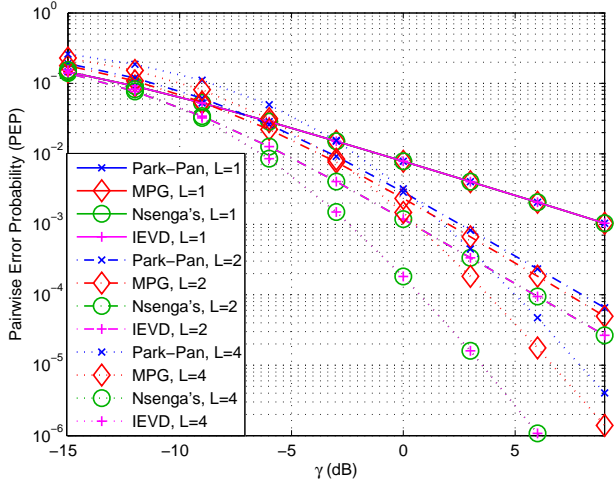}
  \caption{PEP comparisons between the schemes with one same single steering angle at the transmitter and variable steering angles at the receiver. The steering angles at the receiver are random and uniformly distributed within $[-1~1)$.}
  \label{fig:fig2}
\end{center}
\end{figure}

Figs. \ref{fig:fig3} and \ref{fig:fig4} depict the PEP comparisons between the schemes with variable steering angles at the transmitter and receiver, with deterministic and random steering angles, respectively. It is found that in both deterministic and random cases, the same results can be observed as that from Figs. \ref{fig:fig1} and \ref{fig:fig2}, respectively. That is, all the schemes achieve full diversity. Besides, MPG and Park-Pan achieve the same performance in the deterministic case, because there is actually no MPG operation in the MPG scheme. But in the random case, MPG achieves a better array gain than Park-Pan, which benefits from the MPG operation. IEVD and Nsenga's achieve the same performance in both cases, and have an increasingly superiority over MPG and Park-Pan as $L$ increases. Comparing Fig. \ref{fig:fig1} with Fig. \ref{fig:fig3}, as well as Fig. \ref{fig:fig2} with Fig. \ref{fig:fig4}, it is found that the two optimal schemes almost achieve the same performance in the same one transmit steering angle case and the variable transmit steering angles case, but MPG and Park-Pan have an increasing loss in array gain as $L$ increases, which further shows the effectiveness of the sub-optimal schemes.

\begin{figure}[{\rm{t}}]
\begin{center}
  \includegraphics[width=\figwidth cm]{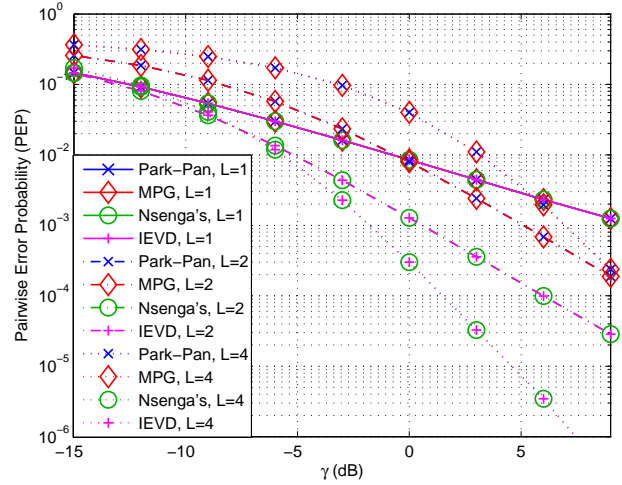}
  \caption{PEP comparisons between the schemes with variable steering angles at the transmitter and receiver. The steering angles are deterministic and equally spaced at both ends.}
  \label{fig:fig3}
\end{center}
\end{figure}

\begin{figure}[{\rm{t}}]
\begin{center}
  \includegraphics[width=\figwidth cm]{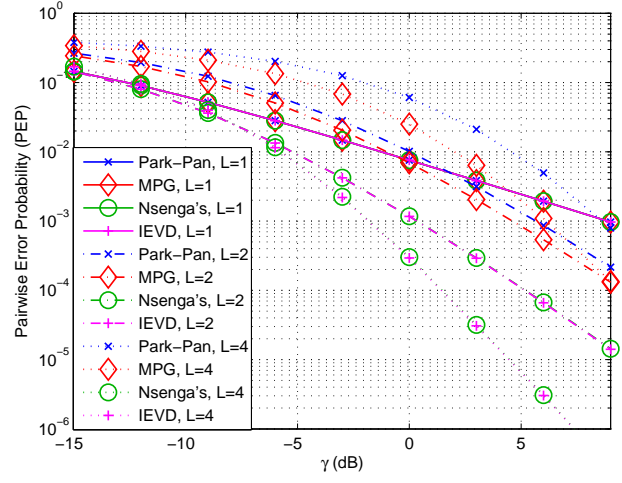}
  \caption{PEP comparisons between the schemes with variable steering angles at the transmitter and receiver. The steering angles are random and uniformly distributed within $[-1~1)$ at both ends.}
  \label{fig:fig4}
\end{center}
\end{figure}

These results are all obtained through the bounded PEP curves. To demonstrate the rational of them, the block-error rate (BLER) curves are obtained via simulation with deterministic and random steering angles as shown in Figs. \ref{fig:fig5} and \ref{fig:fig6}, respectively. In the simulations, the block size is 32, and the ZP length is 8. A single BLER is obtained based on the simulation of $10^7$ transmissions and receptions of a block with randomly realized channels, which guarantees that the BLER curves are precise. From these two figures the same results can be observed and concluded as that from Figs. \ref{fig:fig3} and \ref{fig:fig4}.

\begin{figure}[{\rm{t}}]
\begin{center}
  \includegraphics[width=\figwidth cm]{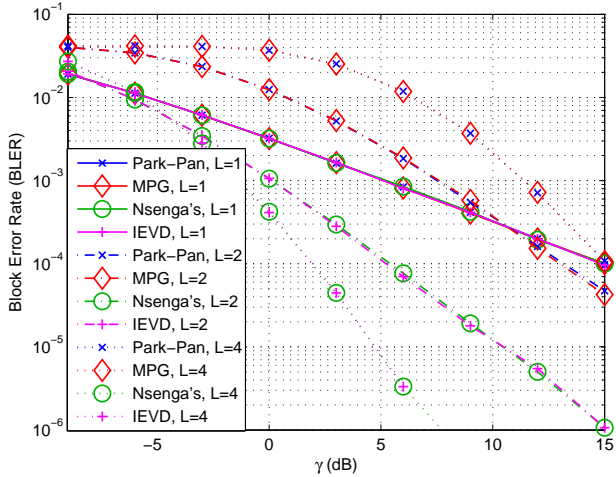}
  \caption{BLER comparisons between the schemes with variable steering angles at the transmitter and receiver. The steering angles are deterministic and equally spaced at both ends.}
  \label{fig:fig5}
\end{center}
\end{figure}

\begin{figure}[{\rm{t}}]
\begin{center}
  \includegraphics[width=\figwidth cm]{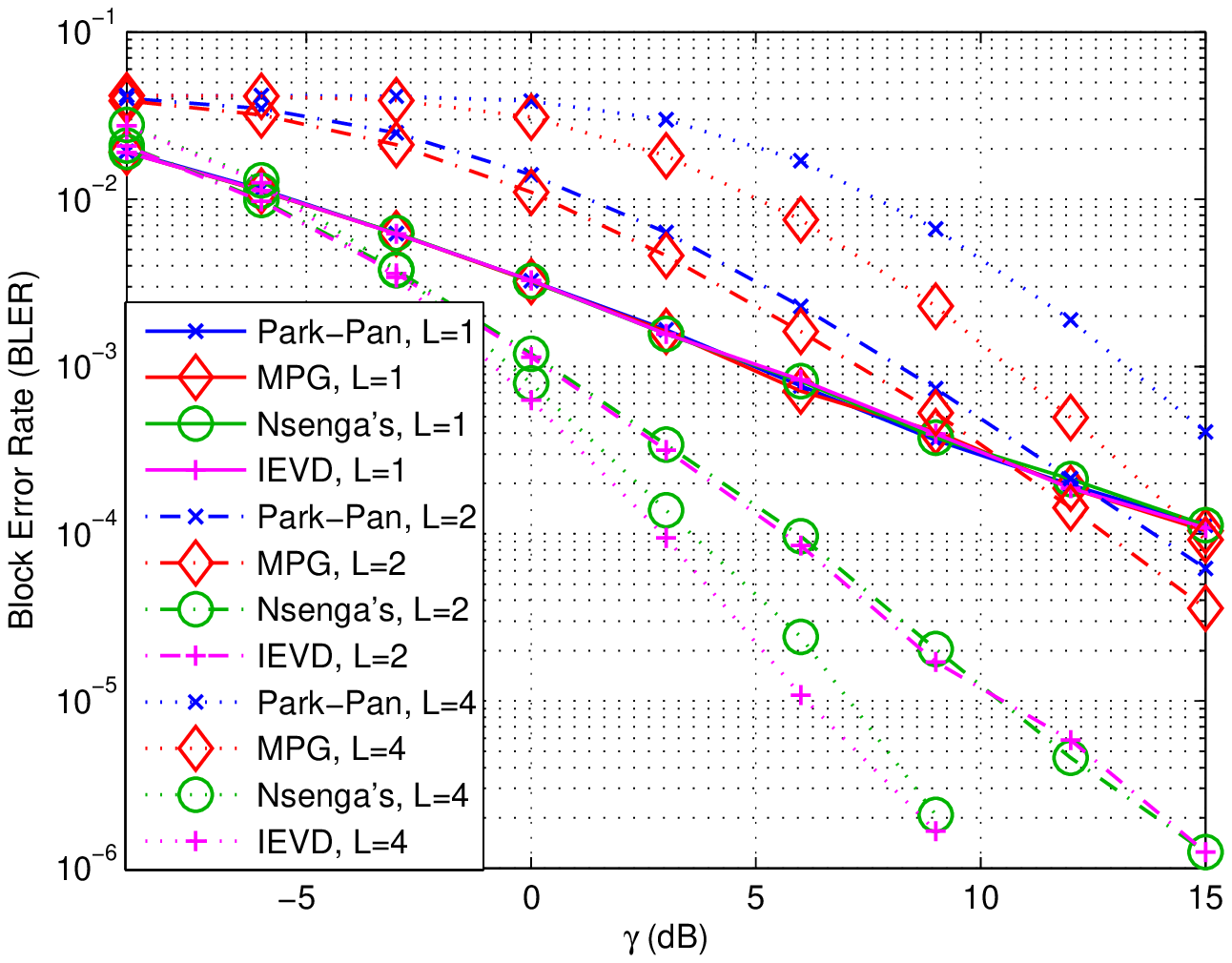}
  \caption{BLER comparisons between the schemes with variable steering angles at the transmitter and receiver. The steering angles are random and uniformly distributed within $[-1~1)$ at both ends.}
  \label{fig:fig6}
\end{center}
\end{figure}

\begin{figure}[{\rm{t}}]
\begin{center}
  \includegraphics[width=\figwidth cm]{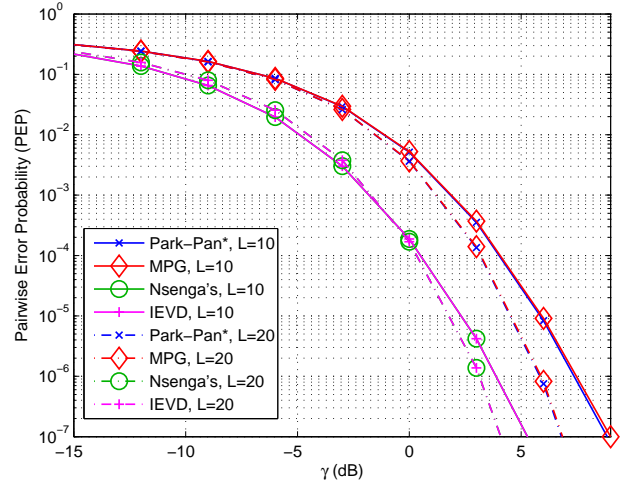}
  \caption{PEP comparisons between the schemes with variable steering angles at the transmitter and receiver, where the number of MPCs is large. The steering angles are deterministic and equally spaced at both ends. Park-Pan* denotes Park-Pan with the process that randomly selecting $n_{\rm{t}}$ and $n_{\rm{r}}$ MPCs at the transmitter and receiver to compute the AWVs.}
  \label{fig:fig11L}
\end{center}
\end{figure}

\begin{figure}[{\rm{t}}]
\begin{center}
  \includegraphics[width=\figwidth cm]{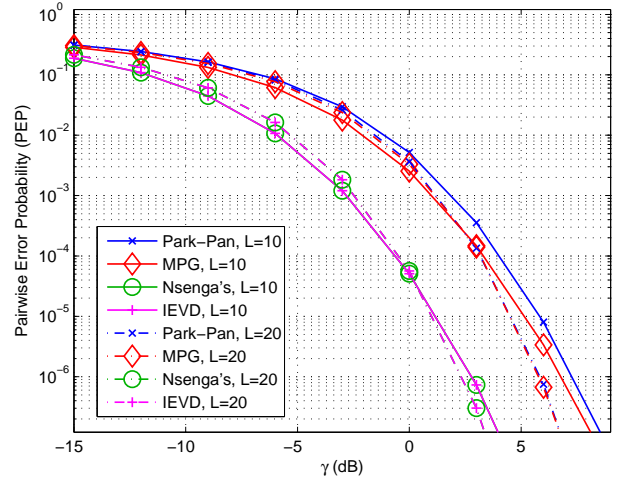}
  \caption{PEP comparisons between the schemes with variable steering angles at the transmitter and receiver, where the number of MPCs is large. The steering angles are random and uniformly distributed within $[-1~1)$ at both ends. Park-Pan* denotes Park-Pan with the process that randomly selecting $n_{\rm{t}}$ and $n_{\rm{r}}$ MPCs at the transmitter and receiver to compute the AWVs.}
  \label{fig:fig12L}
\end{center}
\end{figure}

As aforementioned, Park-Pan is infeasible when $L>\min(\{n_{\rm{t}},n_{\rm{r}}\})$, but the two sub-optimal schemes and MPG are applicable. A natural way to make Park-Pan feasible even when $L>\min(\{n_{\rm{t}},n_{\rm{r}}\})$ is to randomly select $n_{\rm{t}}$ and $n_{\rm{r}}$ MPCs at the transmitter and receiver to compute the AWVs. Park-Pan with such a process is termed as Park-Pan*. Figs. \ref{fig:fig11L} and \ref{fig:fig12L} show their PEP performances in the case of $L>\min(\{n_{\rm{t}},n_{\rm{r}}\})$, with deterministic equally-spaced and uniformly-distributed random steering angles, respectively. It is observed again that in both cases the two sub-optimal schemes achieve the same performance, which is better than that of MPG and Park-Pan*. With deterministic angles, as shown in Fig. \ref{fig:fig11L}, MPG almost achieves the same performance as Park-Pan*. While with random angles, as shown in Fig. \ref{fig:fig12L}, MPG achieves a better array gain in the case of $L=10$, and an approximately equivalent array gain to Park-Pan* in the case of $L=20$. In addition, both MPG and Park-Pan* achieve full diversity. These results show that, in presence of massive MPCs, MPG is still feasible and effective. In addition, Park-Pan can be made feasible by randomly selecting $n_{\rm{t}}$ and $n_{\rm{r}}$ MPCs, instead of $L$ MPCs, at the transmitter and receiver to compute the AWVs.

\subsection{Complexity Issue}
As aforementioned, the two sub-optimal schemes, i.e., IEVD and Nsenga's, require real-time CSI. Thus, they are only applicable in slow fading scenario. In fast fading scenario, the MPG and Park-Pan schemes can be adopted, which both require only the quasi-static steering vectors ${\bf{h}}_\ell$ and ${\bf{g}}_\ell$, rather than the fast varying coefficients $\lambda_\ell$, at the transmitter and receiver, respectively.

Compared with Nsenga's, the proposed IEVD achieves exactly the same performance, but has a reduced overhead and implementation complexity. The overhead of channel estimation for Nsenga's scheme is $n_{\rm{r}}\times n_{\rm{t}}$ training sequences \cite{nsenga_2009}, while that for IEVD with the training approach is $2(n_{\rm{r}}+ n_{\rm{t}})$ training sequences, where the iteration number is set to 2. It is clear that the training overhead is greatly reduced by IEVD with the training approach, especially when the number of antennas is large at the transmitter and receiver. Moreover, IEVD with the training approach only needs to compute matrix multiplication, which has a much lower complexity than Nsenga's scheme, where the computations of EVD and Schmidt decomposition in the tensor space are required. In brief, IEVD with the training approach achieves the same performance with Nsenga's scheme, with a reduced overhead and complexity.

On the other hand, from the performance evaluations it can be found that, compared with Park-Pan, MPG achieves a better array gain. Besides, MPG does not increase overhead, because both MPG and Park-Pan only require the quasi-static steering vectors at the transmitter and receiver. In addition, although MPG requires grouping operation and computations of equivalent steering vectors, which are not required by Park-Pan, it needs a lower-dimensional matrix inverse than Park-Pan, i.e., ${( {{{{\bf{\tilde H}}}^{\rm{H}}}{\bf{\tilde H}}} )^{ - 1}}$ and ${( {{{{\bf{\tilde G}}}^{\rm{H}}}{\bf{\tilde G}}} )^{ - 1}}$ have lower dimensions than ${\left( {{{\bf{\bar H}}^{\rm{H}}}{\bf{\bar H}}} \right)^{ - 1}}$ and ${\left( {{{\bf{\bar G}}^{\rm{H}}}{\bf{\bar G}}} \right)^{ - 1}}$, respectively, due to the MPG operation. In summary, MPG achieve better performance than Park-Pan with an approximately equivalent complexity.

\section{Conclusion}
Two Tx/Rx joint beamforming schemes have been proposed for MMWC in this paper. The first one is IEVD, which is sub-optimal and suitable under slow fading channel. To make this scheme practically feasible, a training approach has been suggested. It is demonstrated that IEVD and its training approach basically converge with only 2 iterations. Compared with the existing sub-optimal scheme, i.e., Nsenga's, IEVD with its training approach achieves the same performance but a reduced overhead and computational complexity. The other one is MPG suitable under fast fading channel, which groups the MPCs and concurrently beamforms towards multiple steering angles of the grouped MPCs. Compared with the existing alternative scheme, i.e., Park-Pan, MPG can work in the case that the number of MPCs is greater than that of antennas at the transmitter or receiver, but Park-Pan cannot. Additionally, MPG achieves full diversity, the same as Park-Pan, but a better array gain than Park-Pan, when the number of MPCs is smaller than or around that of antennas at the transmitter and the receiver, with an approximately equivalent complexity.

\section*{Acknowledgment}

The authors would like to thank the editor and the anonymous reviewers for
their many useful and detailed comments that have helped to improve the
presentation of this manuscript.


\begin{thebibliography}{1}
\bibitem{wangjunbo2009ant}
J.-B. Wang, M.~Chen, X.~Wan, and C.~Wei, ``Ant-colony-optimization-based
  scheduling algorithm for uplink {CDMA} nonreal-time data,'' \emph{IEEE
  Transactions on Vehicular Technology}, vol.~58, no.~1, pp. 231--241, 2009.

\bibitem{wangjb2013dyn}
J.~Wang, Q.~Su, M.~Chen, and X.~Song, ``A dynamic region based limited feedback
  scheme for efficient power allocation in {OFDM} systems,'' \emph{IEEE
  Communications Letters}, vol.~17, no.~11, pp. 2036--2039, October 2013.

\bibitem{wangg2014energy}
J.~Wu, G.~Wang, and Y.~Zheng, ``Energy efficiency and spectral efficiency
  tradeoff in type-{I} {ARQ} systems,'' \emph{IEEE Journal on Selected Areas in
  Communications}, vol.~32, no.~2, pp. 356--366, February 2014.

\bibitem{wangg2014accurate}
G.~Wang, J.~Wu, and Y.~R. Zheng, ``An accurate frame error rate approximation
  of coded diversity systems with non-identical diversity branches,'' in
  \emph{IEEE International Conference on Communications (ICC)}.\hskip 1em plus
  0.5em minus 0.4em\relax IEEE, June 2014, pp. 5312--5317.

\bibitem{Rapp_2010_60GHz_general}
R.~C. Daniels, J.~N. Murdock, T.~S. Rappaport, and R.~W. Heath, ``60 {GHz}
  wireless: up close and personal,'' \emph{IEEE Microwave Magazine}, vol.~11,
  no.~7, pp. 44--50, 2010.

\bibitem{wang_2011_MMWCS}
K.-C. Huang and Z.~Wang, \emph{Millimeter Wave Communication Systems}.\hskip
  1em plus 0.5em minus 0.4em\relax Wiley-IEEE Press, 2011, vol.~29.

\bibitem{Park_2008_mmw_wlan_cha&feas}
M.~Park, C.~Cordeiro, E.~Perahia, and L.~L. Yang, ``Millimeter-wave
  multi-gigabit {WLAN}: challenges and feasibility,'' in \emph{IEEE
  International Symposium on Personal, Indoor and Mobile Radio
  Communications}.\hskip 1em plus 0.5em minus 0.4em\relax IEEE, 2008, pp. 1--5.

\bibitem{Park_2010_11ad}
E.~Perahia, C.~Cordeiro, M.~Park, and L.~L. Yang, ``{IEEE} 802.11 ad: defining
  the next generation multi-{Gbps} {Wi-Fi},'' in \emph{IEEE Consumer
  Communications and Networking Conference (CCNC)}.\hskip 1em plus 0.5em minus
  0.4em\relax IEEE, 2010, pp. 1--5.

\bibitem{Xia_2011_60GHz_Tech}
S.~K. Yong, P.~Xia, and A.~Valdes-Garcia, \emph{{60GHz} Technology for {Gbps}
  {WLAN} and {WPAN}: from Theory to Practice}.\hskip 1em plus 0.5em minus
  0.4em\relax Wiley, 2011.

\bibitem{xiaozhenyu2013div}
Z.~Xiao, ``Suboptimal spatial diversity scheme for 60 {GHz} millimeter-wave
  {WLAN},'' \emph{IEEE Communications Letters}, vol.~17, no.~9, pp. 1790--1793,
  2013.

\bibitem{khan_2011}
F.~Khan and J.~Pi, ``Millimeter-wave mobile broadband: unleashing 3--300{GHz}
  spectrum,'' in \emph{IEEE Wireless Commun. Netw. Conf}, 2011.

\bibitem{pi_2011_MMW_intro}
Z.~Pi and F.~Khan, ``An introduction to millimeter-wave mobile broadband
  systems,'' \emph{IEEE Communications Magazine}, vol.~49, no.~6, pp. 101--107,
  2011.

\bibitem{xia_2008_prac_ante_traning}
P.~Xia, H.~Niu, J.~Oh, and C.~Ngo, ``Practical antenna training for millimeter
  wave {MIMO} communication,'' in \emph{IEEE Vehicular Technology Conference
  (VTC)}.\hskip 1em plus 0.5em minus 0.4em\relax IEEE, 2008, pp. 1--5.

\bibitem{xia_2008_prac_SDMA}
P.~Xia, S.~Yong, J.~Oh, and C.~Ngo, ``A practical {SDMA} protocol for 60 {GHz}
  millimeter wave communications,'' in \emph{Asilomar Conference on Signals,
  Systems and Computers}.\hskip 1em plus 0.5em minus 0.4em\relax IEEE, 2008,
  pp. 2019--2023.

\bibitem{Park_2012_beam_diversity}
M.~Park and H.~Pan, ``A spatial diversity technique for {IEEE} 802.11 ad {WLAN}
  in 60 {GHz} band,'' \emph{IEEE Communications Letters}, vol.~16, no.~8, pp.
  1260--1262, 2012.

\bibitem{maltsev_2010}
A.~Maltsev, R.~Maslennikov, A.~Sevastyanov, A.~Lomayev, A.~Khoryaev,
  A.~Davydov, and V.~Ssorin, ``Characteristics of indoor millimeter-wave
  channel at 60 {GHz} in application to perspective {WLAN} system,'' in
  \emph{European Conference on Antennas and Propagation (EuCAP)}.\hskip 1em
  plus 0.5em minus 0.4em\relax IEEE, 2010, pp. 1--5.

\bibitem{geng_2009}
S.~Geng, J.~Kivinen, X.~Zhao, and P.~Vainikainen, ``Millimeter-wave propagation
  channel characterization for short-range wireless communications,''
  \emph{IEEE Transactions on Vehicular Technology}, vol.~58, no.~1, pp. 3--13,
  2009.

\bibitem{Rappa_2002_60GHz_indoor_ch}
H.~Xu, V.~Kukshya, and T.~S. Rappaport, ``Spatial and temporal characteristics
  of {60-GHz} indoor channels,'' \emph{IEEE Journal on Selected Areas in
  Communications}, vol.~20, no.~3, pp. 620--630, 2002.

\bibitem{jacob_2011}
M.~Jacob, A.~de~Graauw, M.~Spella, P.~Herrero, S.~Priebe, J.~Schoebel, and
  T.~Kurner, ``Performance evaluation of 60 {GHz WLAN} antennas under realistic
  propagation conditions with human shadowing,'' in \emph{General Assembly and
  Scientific Symposium}.\hskip 1em plus 0.5em minus 0.4em\relax IEEE, 2011, pp.
  1--4.

\bibitem{rapp_2011_MMW}
T.~Rappaport, F.~Gutierrez, E.~Ben-Dor, J.~Murdock, Y.~Qiao, and J.~Tamir,
  ``Broadband millimeter wave propagation measurements and models using
  adaptive beam antennas for outdoor urban cellular communications,''
  \emph{IEEE Transactions on Antennas and Propagation}, 2011.

\bibitem{Rapp_2012_cellular_MMW}
T.~S. Rappaport, Y.~Qiao, J.~I. Tamir, J.~N. Murdock, and E.~Ben-Dor,
  ``Cellular broadband millimeter wave propagation and angle of arrival for
  adaptive beam steering systems,'' in \emph{IEEE Radio and Wireless Symposium
  (RWS)}.\hskip 1em plus 0.5em minus 0.4em\relax IEEE, 2012, pp. 151--154.

\bibitem{tang_2005}
Y.~Tang, B.~Vucetic, and Y.~Li, ``An iterative singular vectors estimation
  scheme for beamforming transmission and detection in {MIMO} systems,''
  \emph{IEEE Communications Letters}, vol.~9, no.~6, pp. 505--507, 2005.

\bibitem{wang_2009_beam_codebook}
J.~Wang, Z.~Lan, C.~Pyo, T.~Baykas, C.~Sum, M.~Rahman, J.~Gao, R.~Funada,
  F.~Kojima, and H.~Harada, ``Beam codebook based beamforming protocol for
  multi-{G}bps millimeter-wave {WPAN} systems,'' \emph{IEEE Journal on Selected
  Areas in Communications}, vol.~27, no.~8, pp. 1390--1399, 2009.

\bibitem{libin2013}
B.~Li, Z.~Zhou, W.~Zou, X.~Sun, and G.~Du, ``On the efficient beam-forming
  training for {60GHz} wireless personal area networks,'' \emph{IEEE
  Transactions on Wireless Communications}, vol.~12, no.~2, pp. 504--515, 2013.

\bibitem{nsenga_2009}
J.~Nsenga, W.~Van~Thillo, F.~Horlin, V.~Ramon, A.~Bourdoux, and R.~Lauwereins,
  ``Joint transmit and receive analog beamforming in 60 {GHz MIMO} multipath
  channels,'' in \emph{IEEE International Conference on Communications
  (ICC)}.\hskip 1em plus 0.5em minus 0.4em\relax IEEE, 2009, pp. 1--5.

\bibitem{TseFundaWC}
D.~Tse and P.~Viswanath, \emph{Fundamentals of wireless communication}.\hskip
  1em plus 0.5em minus 0.4em\relax Cambridge university press, 2005.

\bibitem{moraitis_2007}
N.~Moraitis and P.~Constantinou, ``Indoor channel capacity evaluation utilizing
  {ULA} and {URA} antennas in the millimeter wave band,'' in \emph{IEEE
  Personal, Indoor and Mobile Radio Communications (PIMRC)}.\hskip 1em plus
  0.5em minus 0.4em\relax IEEE, 2007, pp. 1--5.

\bibitem{sayeed_2011}
V.~Raghavan and A.~M. Sayeed, ``Sublinear capacity scaling laws for sparse
  {MIMO} channels,'' \emph{IEEE Transactions on Information Theory}, vol.~57,
  no.~1, pp. 345--364, 2011.

\bibitem{sayeed_2007}
A.~M. Sayeed and V.~Raghavan, ``Maximizing {MIMO} capacity in sparse multipath
  with reconfigurable antenna arrays,'' \emph{IEEE Journal of Selected Topics
  in Signal Processing}, vol.~1, no.~1, pp. 156--166, 2007.

\bibitem{Rappaport2002}
T.~S. Rappaport, \emph{Wireless Communications: Principles \& Practice},
  2nd~ed.\hskip 1em plus 0.5em minus 0.4em\relax Prentice Hall, 2002.

\bibitem{marinier_1998}
P.~Marinier, G.~Y. Delisle, and C.~L. Despins, ``Temporal variations of the
  indoor wireless millimeter-wave channel,'' \emph{IEEE Transactions on
  Antennas and Propagation}, vol.~46, no.~6, pp. 928--934, 1998.

\bibitem{Ayach2014}
O.~El~Ayach, S.~Rajagopal, S.~Abu-Surra, Z.~Pi, and R.~Heath, ``Spatially
  sparse precoding in millimeter wave {MIMO} systems,'' \emph{IEEE Transactions
  on Wireless Communications}, vol.~13, no.~3, pp. 1499--1513, March 2014.

\bibitem{Xiaozy2014BeamTrain}
Z.~Xiao, L.~Bai, and J.~Choi, ``Iterative joint beamforming training with
  constant-amplitude phased arrays in millimeter-wave communications,''
  \emph{IEEE Communications Letters}, vol.~18, no.~5, pp. 829--832, May 2014.

\bibitem{Emami2011}
S.~Emami, R.~Wiser, E.~Ali, M.~Forbes, M.~Gordon, X.~Guan, S.~Lo, P.~McElwee,
  J.~Parker, J.~Tani, J.~Gilbert, and C.~Doan, ``A {60GHz} {CMOS} phased-array
  transceiver pair for multi-{Gb/s} wireless communications,'' in \emph{IEEE
  International Solid-State Circuits Conference (ISSCC)}, Feb 2011, pp.
  164--166.

\bibitem{Pinel2009}
S.~Pinel, P.~Sen, S.~Sarkar, B.~Perumana, D.~Dawn, D.~Yeh, F.~Barale, M.~Leung,
  E.~Juntunen, P.~Vadivelu, K.~Chuang, P.~Melet, G.~Iyer, and J.~Laskar,
  ``{60GHz} single-chip {CMOS} digital radios and phased array solutions for
  gaming and connectivity,'' \emph{IEEE Journal on Selected Areas in
  Communications}, vol.~27, no.~8, pp. 1347--1357, October 2009.

\bibitem{LiQiang2013_AO}
Q.~Li, M.~Hong, H.-T. Wai, Y.-F. Liu, W.-K. Ma, and Z.-Q. Luo, ``Transmit
  solutions for {MIMO} wiretap channels using alternating optimization,''
  \emph{IEEE Journal on Selected Areas in Communications}, vol.~31, no.~9, pp.
  1714--1727, September 2013.

\bibitem{Lamare2011_AO}
R.~de~Lamare and R.~Sampaio-Neto, ``Adaptive reduced-rank equalization
  algorithms based on alternating optimization design techniques for mimo
  systems,'' \emph{IEEE Transactions on Vehicular Technology}, vol.~60, no.~6,
  pp. 2482--2494, July 2011.

\bibitem{Horn_Johnson}
R.~A. Horn and C.~R. Johnson, \emph{Matrix Analysis}, 2nd~ed.\hskip 1em plus
  0.5em minus 0.4em\relax Cambridge University Press, 2013.

\bibitem{Bharath2013}
B.~Bharath and C.~Murthy, ``Channel training signal design for reciprocal
  multiple antenna systems with beamforming,'' \emph{IEEE Transactions on
  Vehicular Technology}, vol.~62, no.~1, pp. 140--151, Jan 2013.

\bibitem{Keizer2011}
W.~P. M.~N. Keizer, ``Fast and accurate array calibration using a synthetic
  array approach,'' \emph{IEEE Transactions on Antennas and Propagation},
  vol.~59, no.~11, pp. 4115--4122, Nov 2011.

\bibitem{Ng2009}
B.~P. Ng, J.~P. Lie, M.-H. Er, and A.~Feng, ``A practical simple geometry and
  gain/phase calibration technique for antenna array processing,'' \emph{IEEE
  Transactions on Antennas and Propagation}, vol.~57, no.~7, pp. 1963--1972,
  July 2009.

\bibitem{ohno2006performance}
S.~Ohno, ``Performance of single-carrier block transmissions over multipath
  fading channels with linear equalization,'' \emph{{IEEE} Transactions on
  Signal Processing}, vol.~54, no.~10, pp. 3678--3687, 2006.

\end{thebibliography}

\vspace{-0.1 in}
\begin{biography}[{\includegraphics[width=1in,height=1.25in,clip,keepaspectratio]{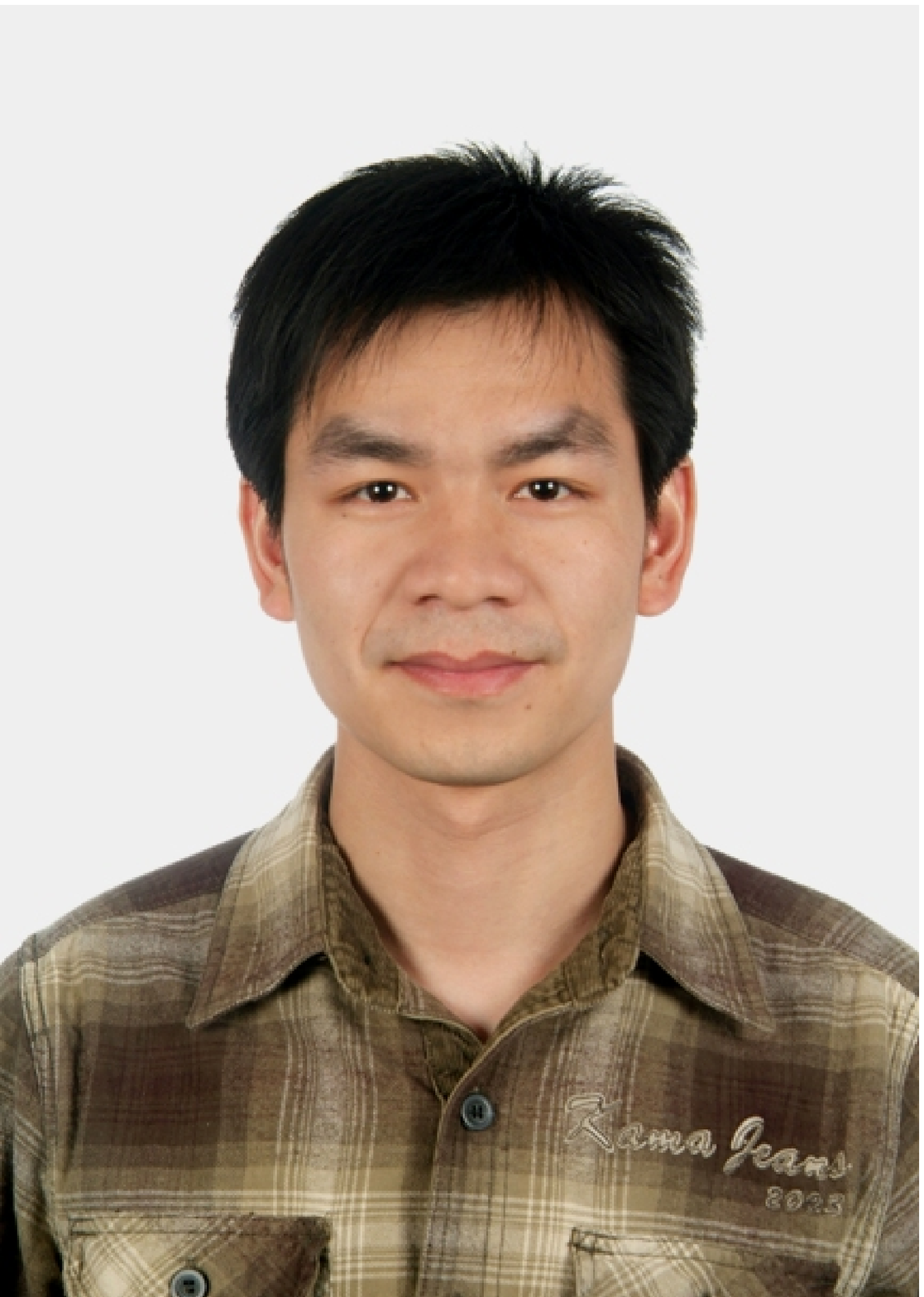}}]{Zhenyu Xiao}
received the Ph.D. degree in the department of Electronic
Engineering from Tsinghua University, Beijing, China, in 2011, and
B.E. degree in the department of Electronics and Information
Engineering from Huazhong University of Science and Technology,
Wuhan, China, in 2006. From 2011 to 2013, he served as a post doctor in the E.E.
department of Tsinghua University, Beijing, China. Since 2013, he has been a lecturer in Beihang University, Beijing, China.

Dr. Xiao has published over 40 papers, and served as reviewers for IEEE Transactions on Signal Processing, IEEE Transactions on Wireless Communications, IEEE Transactions on Vehicular Technology, IEEE Communications Letters, etc. He has been TPC members of IEEE GLOBECOM'12, IEEE WCSP'12, IEEE ICC'15, etc. His current research
interests are communication signal processing and practical system
implementation for wideband communication systems, which cover
synchronization, multipath signal processing, diversity, multiple
antenna technology, etc. Currently his is dedicated in millimeter communications and full-duplex communications.
\end{biography}

\begin{biography}[{\includegraphics[width=1in,height=1.25in,clip,keepaspectratio]{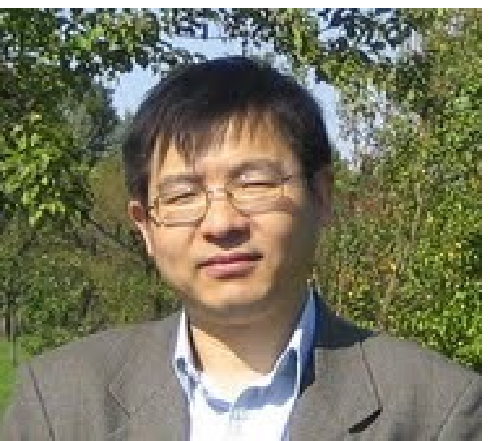}}]{Xiang-Gen Xia} (M'97, S'00, F'09) received his B.S. degree in mathematics from Nanjing Normal University, Nanjing, China, and his M.S. degree in mathematics from Nankai University, Tianjin, China, and his Ph.D. degree in electrical engineering  from the University of Southern California, Los Angeles, in 1983, 1986, and 1992, respectively.

He was a Senior/Research Staff Member at Hughes Research Laboratories, Malibu, California, during 1995-1996. In September 1996, he joined the Department of Electrical and Computer Engineering, University of Delaware, Newark, Delaware, where he is the Charles Black Evans Professor. His current research interests include space-time coding, MIMO and OFDM systems, digital signal processing, and SAR and ISAR imaging. Dr. Xia has over 280 refereed journal articles published and accepted, and 7 U.S. patents awarded and is the author of the book Modulated Coding for Intersymbol Interference Channels (New York, Marcel Dekker, 2000).

Dr. Xia received the National Science Foundation (NSF) Faculty Early Career Development (CAREER) Program Award in 1997, the Office of Naval Research (ONR) Young Investigator Award in 1998, and the Outstanding Overseas Young Investigator Award from the National Nature Science Foundation of China in 2001. He also received the Outstanding Junior Faculty Award of the Engineering School of the University of Delaware in 2001. He is currently serving and has served as an Associate Editor for numerous international journals including IEEE Transactions on Signal Processing, IEEE Transactions on Wireless Communications, IEEE Transactions on Mobile Computing, and IEEE Transactions on Vehicular Technology. Dr. Xia is Technical Program Chair of the Signal Processing Symp., Globecom 2007 in Washington D.C. and the General Co-Chair of ICASSP 2005 in Philadelphia.
\end{biography}

\begin{biography}[{\includegraphics[width=1in,height=1.25in,clip,keepaspectratio]{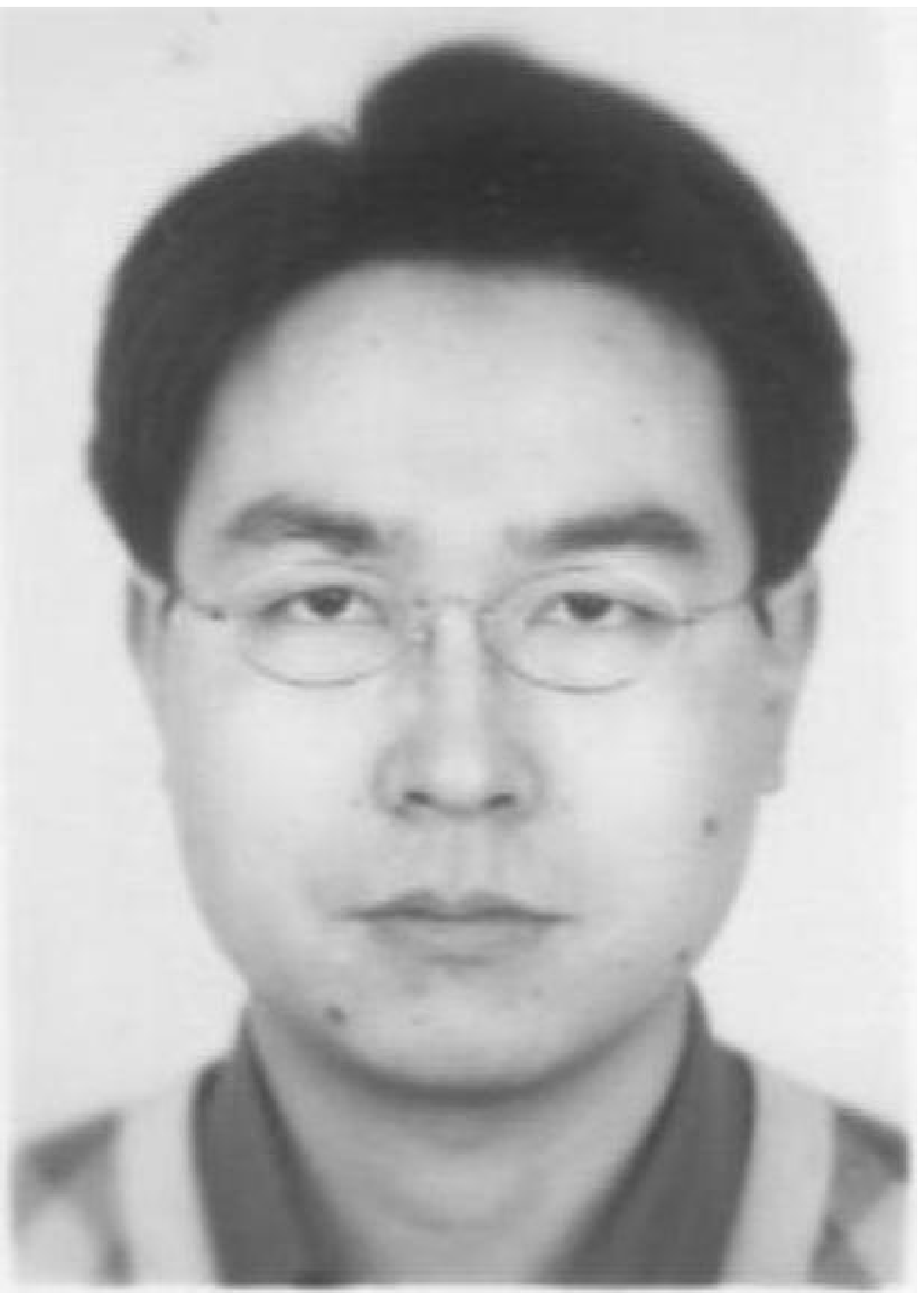}}]{Depeng Jin}
received the B.S. and Ph.D. degrees from Tsinghua University,
Beijing, China, in 1995 and 1999, respectively, both in electronics
engineering. Now he is a professor at Tsinghua University and chair
of Department of Electronic Engineering. Dr. Jin was awarded
National Scientific and Technological Innovation Prize (Second
Class) in 2002. His research fields include telecommunications,
high-speed networks, future internet architecture and ASIC design
for wireless communications.
\end{biography}

\begin{biography}[{\includegraphics[width=1in,height=1.25in,clip,keepaspectratio]{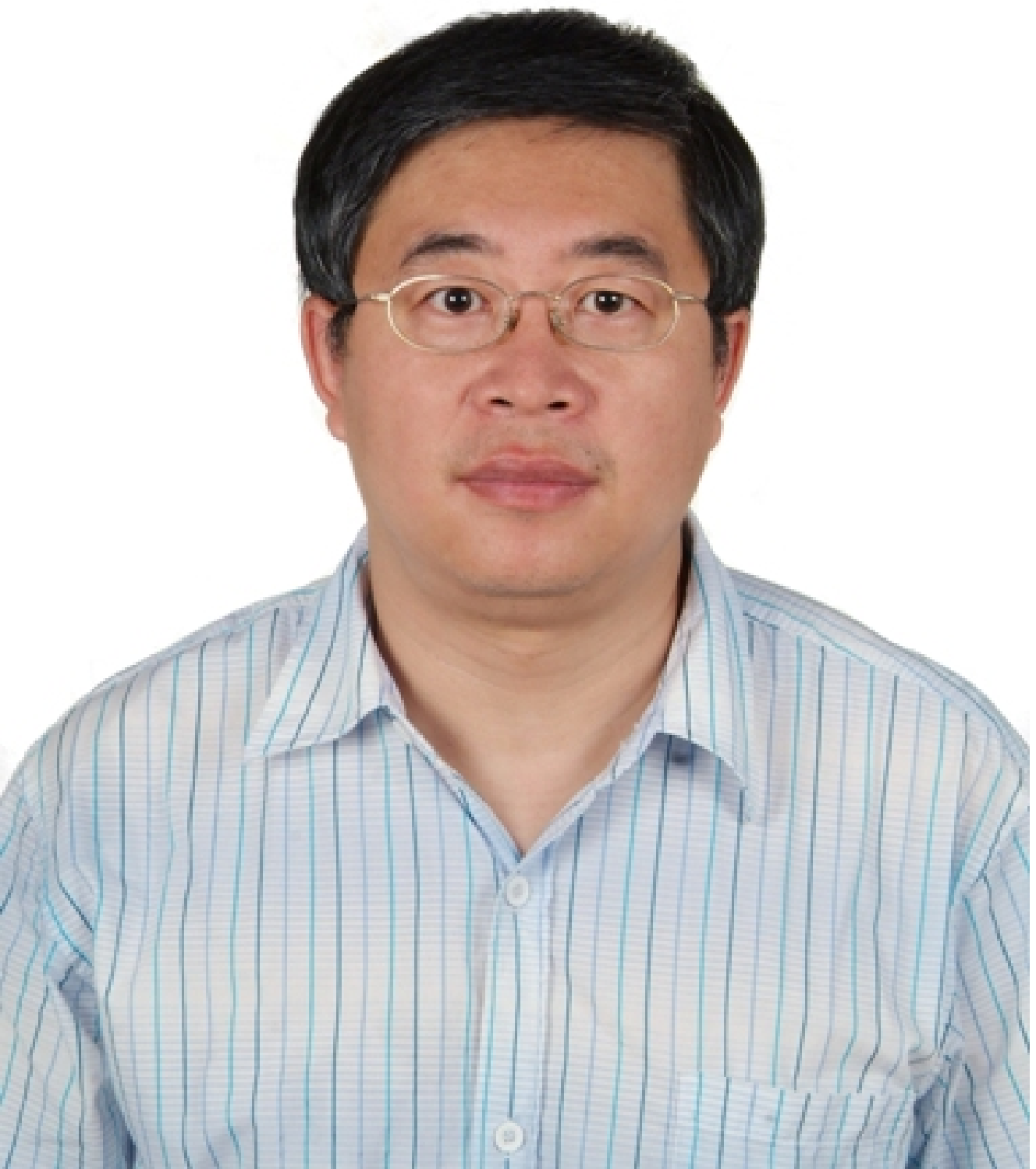}}]{Ning Ge}
received his B.E. degree in 1993, and his Ph.D. in 1997, both from
Tsinghua University, China. From 1998 to 2000, he worked on the
development of ATM switch fabric ASIC in ADC Telecommunications,
Dallas. Since 2000 he has been in the Dept. of Electronics
Engineering at Tsinghua University. Currently he is a professor and
Director of Communication and Microwave Institute. His current
interests are in the areas of communication ASIC design, wireless
communications, etc.
\end{biography}

\end{document}